\numberwithin{equation}{section} \makeatletter
\renewcommand{\subsection}{\@startsection
{subsection}{2}{0mm}{\baselineskip}{-0.25cm}
{\normalfont\normalsize\bf}} \makeatother
\newtheorem{theorem}{Theorem}[section]
\newtheorem{lemma}[theorem]{Lemma}
\newtheorem{corollary}[theorem]{Corollary
}
\newtheorem{definition}[theorem]{Definition}
\newtheorem{remark}[theorem]{Remark}
\newtheorem{proposition}[theorem]{Proposition}
\newtheorem{example}[theorem]{Example}
\newtheorem{assumption}[theorem]{Assumption}
\newtheorem{notation}[theorem]{Notation}
\def \A {\mathcal A}
\def \F {\mathcal F}
\def \P {\mathbf P}
\def \Q {\mathbf Q}
\def \I {{\mathbf 1}}
\def \R {\mathbb R}
\def \bF {\mathbb F}
\def \bE {\mathbb E}
\def \bN {\mathbb N}
\newcommand{\ud}{\mathrm d}
\newcommand{\ds}{\displaystyle}
\newcommand{\esp}[2][\mathbb E] {#1\left[#2\right]}
\def \a {{(1)}}
\def \b {{(2)}}
\def \i {{(i)}}
\begin{document}

\author[C.~Ceci]{Claudia  Ceci}
\address{Claudia  Ceci, Department of Economics,
University ``G. D'Annunzio'' of Chieti-Pescara, Viale Pindaro, 42,
I-65127 Pescara, Italy.}\email{c.ceci@unich.it}

\author[K.~Colaneri]{Katia Colaneri}
\address{Katia Colaneri, Department of Economics and Finance,
 University of Rome Tor Vergata, via Columbia, 2, I-00133 Rome, Italy}\email{katia.colaneri@uniroma2.it}

\author[A.~Cretarola]{Alessandra Cretarola}
\address{Alessandra Cretarola, Department of Mathematics and Computer Science,
 University of Perugia, Via Luigi Vanvitelli, 1, I-06123 Perugia, Italy.}\email{alessandra.cretarola@unipg.it}

\title[Optimal reinsurance and investment with common shock dependence]{Optimal reinsurance and investment under common shock dependence between financial and actuarial markets}

\date{}

\begin{abstract}
We study optimal proportional reinsurance and investment strategies for an insurance company which experiences both ordinary and catastrophic claims and wishes to maximize the expected exponential utility of its terminal wealth. We propose a model where the insurance framework is affected by environmental factors, and aggregate claims and stock prices are subject to common shocks, i.e. drastic events such as earthquakes, extreme weather conditions, or even pandemics,  that have an immediate impact on the financial market and simultaneously induce insurance claims. Using the classical stochastic control approach based on the Hamilton-Jacobi-Bellman equation, we provide a verification result for the value function via classical solutions to two backward partial differential equations and characterize the optimal reinsurance and investment strategies. Finally, we make a comparison analysis to discuss the effect of common shock dependence.
\end{abstract}

\maketitle

{\bf Keywords}: Optimal proportional reinsurance; optimal investment; common shock dependence; environmental factors; Hamilton-Jacobi-Bellman equation.

{\bf JEL Classification}: G11, G22, C61.

{\bf AMS Classification}: 60G55, 60J60, 91G05, 91G10, 93E20.

\section{Introduction}
The insurance business and the overall social economy are closely related and play crucial roles in the financial system. Basically, insurance promotes economic development and social stability via its own deepening and functions of transferring risks, financial intermediary, and loss compensation.
Indeed, insurance companies safeguard the financial stability of households and firms by insuring their risks.
Moreover, there are growing links between banks and insurance companies, 
which act as large investors and improve the economic efficiency of the system by spreading individual risks in the financial market. For instance, the policyholder transfers the risk to insurers by paying premiums and insurers employ premiums to make investments. \\
It is important to realize that, similarly to any other business, insurance companies require protection against risk. 
Reinsurance helps insurers to manage their risks by absorbing some of their losses. Legally, reinsurance is an insurance contract in which one insurance company indemnifies, for a premium, another insurance company against all or part of the losses that it may incur. 
Additionally, the management of the surplus of an insurance company may also involve investments in the financial market.
Optimal reinsurance and investment problems for various risk models have gained a lot of interest in financial and insurance literature. They have been widely investigated under different criteria, especially via expected utility maximization, mean-variance criterion and ruin probability minimization (see e.g. \citet{irgens2004optimal,liu2009optimal,shen2015optimal,gu2017optimal,brachetta2019proportional,cao2020optimal} and references therein).
Despite the richness of contributions, only a few papers on optimal investment-reinsurance address the problem in relation to dependent risks. Nevertheless, more and more natural and manmade disasters in recent years have brought great damage to the safety of
lives and properties for people and have demonstrated that the insurance businesses  and the financial market are not  independent to each other. Therefore,  considering dependent risk models in the actuarial literature has become necessary.\\
Pandemics, such as COVID-19, as well as other economically destructive natural phenomena such as hurricanes, earthquakes and wildfires, may have serious consequences for insurance companies, see e.g. some empirical studies as \citet{born2006catastrophic,benali2017impact,richter2020covid}. 
Reinsurance allows insurance companies to manage potential claim shocks by transferring a specific portion or category of risk out of their portfolios to reinsurers' portfolios, and hence, to stabilize their earnings and to increase their capacity to issue more business.\\
The main novelty of this paper is to consider an optimal investment-reinsurance problem for an insurance company
with two lines of business in a stochastic-factor model which accounts for a common shock dependence between the financial and the insurance markets. Precisely, the two lines of business correspond to ordinary claims and catastrophic claims, whose arrival intensities are affected by environmental (stochastic) factors, as well as the insurance and reinsurance premia; furthermore,
we assume that the arrival of catastrophic claims affects risky asset prices by inducing downward jumps. This modeling framework implies that the financial and the insurance frameworks are
correlated by means of a common shock.
Optimal reinsurance and investment problems in dependent financial and non-life insurance markets are also studied e.g. in \citet{hainaut2017contagion,brachetta2020optimal}. In \citet{hainaut2017contagion} a potential contagion risk between financial and insurance activities due to catastrophic events is modeled by time-changed processes; in \citet{brachetta2020optimal}, instead, a diffusion risk model with an external driver modeling a stochastic environment is considered. 
It is important to stress that in most of the contributions on optimal investment-reinsurance problems the expression {\em common shock dependence} refers to the assumption of dependent classes of insurance business, see e.g. \citet{yuen2015optimal, liang2016optimal2, han2019optimal}, where only the optimal proportional reinsurance problem is addressed, and see e.g. \citet{bi2016optimal,bi2019optimal,zhang2020optimal} which also include the optimal investment problem.
To the best of our knowledge, there are only a few papers, where the modeling framework allows for a common shock affecting both {\em the stock market and the insurance market}, see \citet{liang2016optimal,liang2018optimal}. However, in this literature, the insurance risk model is modulated by a compound Poisson process with
constant arrival intensity and risky asset jump sizes (induced by the common shock) are independent of the claim sizes; moreover, only classical premium calculation principles are considered, which considerably simplify the analysis. Instead, in our setting,
claim arrival intensities of both business lines are modeled as functions of additional exogenous stochastic factors; therefore, intensities of the claims arrival process are stochastic and evolve according to changing environmental conditions. Stochastic intensity models can be also found in non-life insurance settings by means of shot-noise Cox processes, see e.g. \citet{dassios2003pricing,schmidt2014catastrophe,cao2020optimal}, and more generally,
stochastic risk factor models in insurance are further considered in \citet{liang2014optimal,brachetta2019proportional,brachetta2019excess,brachetta2020bsde}.
To disentangle the effect of catastrophic and ordinary claims on the financial market we assume that they may produce jumps of different sizes on asset prices and these jumps depend on the claims size as well. In addition, we apply an extended version of the classical expected value and variance premium calculation principles. \\
The portfolio of the insurance company consists of the retention levels of proportional reinsurance agreements
and the investment strategy in the financial market, where a riskless asset and a risky asset are traded.
Using the classical stochastic control approach based on the Hamilton-Jacobi-Bellman equation, we provide a verification result for the value function (see Theorem \ref{Verification})
via classical solutions to two backward partial differential equations, one of which accounts for dependence between the insurance and the financial markets, and to the best of our knowledge, has not been previously analyzed in the literature. We characterize the optimal strategy (see Proposition \ref{MC}, Proposition \ref{existence} and Proposition \ref{ultima}) and discuss its properties by considering a comparison with a corresponding scenario without a common shock effect.\\
The paper is organized as follows. Section \ref{sec:model} introduces the mathematical framework for the financial-insurance market model. In Section \ref{sec:problem} we formulate the main assumptions and describe the optimization problem. Section \ref{sec:valuef} includes the derivation of the Hamilton-Jacobi-Bellman equation and the Verification Theorem. In Section \ref{sec:solution} we solve the resulting minimization problems, discussing in Section \ref{EVP} how the results apply under the expected value premium,
 and we characterize the optimal investment-reinsurance strategy in Section \ref{sec:strategy}. The effects of the common shock on the optimal strategy are investigated in Section \ref{sec:effects}. Moreover, some sufficient conditions for existence and uniqueness of the solution to the Hamilton-Jacobi-Bellman equation are provided in Section \ref{sec:existence}. Finally, most technical proofs are collected in Appendix \ref{app:proofs}.

\section{The mathematical model}\label{sec:model}

Let $(\Omega,\F, \P;\bF)$ be a filtered probability space and assume that the filtration $\bF=\{\F_t, \ t \in [0,T]\}$ satisfies the usual hypotheses. The time $T$ is a finite time horizon that may represent the maturity of a reinsurance contract.
We consider two independent counting processes $N^\a=\{N_t^\a, \ t \in [0,T]\}$, $N^\b=\{N_t^\b, \ t \in [0,T]\}$ that indicate the number of ordinary claims and catastrophic claims, respectively. In the sequel, we refer to ordinary and catastrophic claims as two business lines for the same insurance company. The jump times of $N^\a$ and $N^\b$ correspond to claim arrival times and are denoted by $\{T^\a_n\}_{n \in \bN}$ and $\{T^\b_n\}_{n \in \bN}$. We assume that the claim size of each line of business is described by two sequences of independent and identically distributed $\R^+$-valued random variables denoted by $\{Z^\a_n\}_{n \in \bN}$ and $\{Z^\b_n\}_{n \in \bN}$, respectively, with cumulative distribution functions $F^\a, F^\b:\R \to [0,1]$, satisfying the condition
$$\int_0^{+\infty} z^2 F^{(i)}(\ud z) < \infty,\ \quad i=1,2.$$
For any $n,m \in \mathbb{N}$, $Z^\a_n$ and $Z^\b_m$ indicate the claim amount at time $T^\a_n$ and $T^\b_m$ for each business line.
Processes $N^\a$,  $N^\b$, and the families of variables  $\{Z^\a_n\}_{n \in \bN}$ and $\{Z^\b_n\}_{n \in \bN}$ are assumed to be mutually independent.

We now introduce two stochastic processes $Y^\a=\{Y_t^\a, \ t \in [0,T]\}$, $Y^\b=\{Y_t^\b, \ t \in [0,T]\}$ representing {\em environmental factors} that affect arrival intensities of ordinary and catastrophic claims respectively. To model claim arrival intensities of the processes $N^\a$ and $N^\b$,  we consider two strictly positive measurable functions $\lambda^\i: [0,T] \times \R \to (0,+\infty)$, with $i=1,2$, and define the processes $\{\lambda^\i(t,Y_t^\i),\ t\in [0,T]\}$, for $i=1,2$, satisfying
\begin{equation}\label{eq:integrab_lambda}
\esp{\int_0^T \lambda^\i(t, Y^\i_t) \ud t} < \infty,\quad i=1,2.
\end{equation}
According to the classical construction of a Cox process we assume that for $i=1,2$,
\[
\mathbb{E}[N^\i_t-N^\i_s=k\mid \mathcal{F}^{Y^\i}_T\lor\mathcal{F}_s] = \frac{\bigl(\int_s^t \lambda^\i(u, Y^\i_u)\,\ud u\bigr)^k}{k!}e^{-\int_s^t \lambda^\i(u, Y^\i_u)\, \ud u}, \quad  0\leq s<t\leq T,
\]
for every $k=0,1, \dots$, where $\F^{Y^\i}_T:=\sigma\{Y^\i_t,\  0 \leq t \leq T\}$. Then, the process $\{\lambda^\i(t,Y_t^\i),\ t\in [0,T]\}$ provides the intensity of the counting process $N^\i$, for $i=1,2$, and condition \eqref{eq:integrab_lambda} guarantees that $N^\a$ and $N^\b$ are non-explosive (see \citet{bremaud1981}).

We define the cumulative claim process $C=\{C_t, \ t \in [0,T]\}$ as
\begin{equation}\label{loss}
C_t = C_t^\a+C_t^\b:=\sum_{j=1}^{N_t^\a}Z_j^\a+\sum_{j=1}^{N_t^\b}Z_j^\b,\quad t \in [0,T],
\end{equation}
where for any $t \in [0,T]$, $C_t^\a$ and $C_t^\b$ represent the cumulative claim amount corresponding to each line of business in the time interval $[0, t]$.

\begin{example}
Although we keep the parameters quite general, in our setting it is reasonable to assume that catastrophic claims occur less frequently then ordinary claims.  Moreover one can choose the parameters of the claim size distributions such that catastrophic events correspond to larger claims.
  An example of intensities that model this situation is
  \begin{align*}
  \lambda^\a(t, y^\a)&=\lambda(t) f_1(y^\a)\\
  \lambda^\b(t , y^\b)&=\lambda(t) f_2(y^\b)
  \end{align*}
  for functions $\lambda:[0,T] \to (0,+\infty)$, $f_1:\R\to (1,2)$ and $f_2:\R\to (0,1)$. We could also choose, for instance, that claim sizes of both types are exponentially distributed with distribution functions $F^\a(z)=1-e^{-a z}$ and $F^\b(z)=1-e^{-b z}$ with $b>a>0$, for all $z>0$.
\end{example}

We model the dynamics of the stochastic factors $Y^\a$, $Y^\b$ by diffusion processes
\begin{align}
\ud Y^\a_t= b^\a(t) \ud t + a^\a(t) \ud W^\a_t, \quad Y^\a_0=y^\a\in \R,\label{def:y1}\\
\ud Y^\b_t= b^\b(t) \ud t + a^\b(t) \ud W^\b_t, \quad Y^\b_0=y^\b\in \R,\label{def:y2}
\end{align}
where $a^\i, b^\i:[0,T] \to \R$, for  $i=1,2$, are measurable functions such that
\begin{align}\label{eq:a_i}
\int_0^T \left(\left(a^\i(t)\right)^2 + |b^\i(t)|\right) \ud t<\infty
\end{align}
and $W^\a=\{W_t^\a, \ t \in [0,T]\}$, $W^\b=\{W_t^\b, \ t \in [0,T]\}$ are independent Brownian motions also independent of $C^\a, C^\b$. 

In the sequel, we will make use of the notation $m^\a(\ud t, \ud z)$ and $m^\b(\ud t, \ud z)$ to indicate the random counting measures associated with $C^\a$ and $C^\b$ respectively, and which are defined by
\begin{align}
m^\i(\ud t, \ud z)=\sum_{n\in \bN} \delta_{\{T^\i_n, Z^\i_n\}}(\ud t, \ud z) \I_{\{T^\i_n\leq T\}}, \quad i=1,2,
\end{align}
where $\delta_{\{t,z\}}$ denotes the Dirac measure at point $(t,z)$. The dual predictable projections of the random measure are computed in the lemma below.

\begin{lemma}
Let $N^\a$ and $N^\b$ be Cox processes with stochastic intensities given by $\{\lambda^\a(t,Y_t^\a),\ t\in [0,T]\}$ and $\{\lambda^\b(t,Y_t^\b),\ t\in [0,T]\}$, respectively.
Then, the dual predictable projections of the measures $m^\i(\ud t, \ud z)$, $i=1,2$ are given by
\begin{align}
\nu^\i(\ud t, \ud z)=\lambda^\i(t, Y^\i_t)F^\i(\ud z)\ud t \quad i=1,2.
\end{align}
\end{lemma}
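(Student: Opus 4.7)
The plan is to verify the defining characterization of the dual predictable projection, namely that for every nonnegative $\bF$-predictable random field $H^{(i)}(t,z)$ one has
\[
\esp{\int_0^T\!\!\int_{\R^+} H^{(i)}(t,z)\, m^{(i)}(\ud t,\ud z)}
= \esp{\int_0^T\!\!\int_{\R^+} H^{(i)}(t,z)\,\lambda^{(i)}(t,Y^{(i)}_t)\,F^{(i)}(\ud z)\,\ud t},
\]
for each $i=1,2$. By a monotone class argument it suffices to check this identity on a generating $\pi$-system, for instance on test fields of the form $H^{(i)}(t,z)=B\,\I_{(s,u]}(t)\,g(z)$ with $0\le s<u\le T$, $B\in\F_s$, and $g$ bounded Borel.

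For such a test field, the left-hand side becomes $\esp{B\sum_{n} g(Z^{(i)}_n)\,\I_{\{s<T^{(i)}_n\le u\}}}$. Using the mutual independence of $\{Z^{(i)}_n\}_n$ from $N^{(i)}$, $\F_s$ and $\F^{Y^{(i)}}_T$, I would condition on $\F^{Y^{(i)}}_T\vee\F_s$: by the Cox process identity stated in the text, given this $\sigma$-algebra the increments of $N^{(i)}$ are Poisson-distributed with parameter $\int_s^u\lambda^{(i)}(r,Y^{(i)}_r)\,\ud r$, and the conditional law of the jump points on $(s,u]$ is the mixed-Poisson point process with intensity measure $\lambda^{(i)}(r,Y^{(i)}_r)\,\ud r$. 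Combining this with the i.i.d.\ marks via the Campbell-type formula gives
\[
\condespf[\mathbb{E}]{\sum_n g(Z^{(i)}_n)\,\I_{\{s<T^{(i)}_n\le u\}}}[\F^{Y^{(i)}}_T\vee\F_s]
=\left(\int_{\R^+} g(z)\,F^{(i)}(\ud z)\right)\int_s^u \lambda^{(i)}(r,Y^{(i)}_r)\,\ud r.
\]
Multiplying by $B$ and taking expectations recovers the right-hand side, proving the equality on the generating class.

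To conclude I would extend from this elementary class to the predictable $\sigma$-algebra by the usual monotone-class/functional version of Dynkin's theorem, noting that the integrability condition \eqref{eq:integrab_lambda} together with $\int z^2 F^{(i)}(\ud z)<\infty$ is more than enough to ensure all quantities are finite, so the monotone convergence step is justified. The identity then characterizes $\nu^{(i)}(\ud t,\ud z)=\lambda^{(i)}(t,Y^{(i)}_t)\,F^{(i)}(\ud z)\,\ud t$ as the dual predictable projection of $m^{(i)}$.

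The main obstacle is purely bookkeeping: one must carefully invoke the independence structure among $N^{(i)}$, $\{Z^{(i)}_n\}_n$, $Y^{(i)}$ and $\F_s$ in order to reduce the computation to the conditional marked-Poisson framework, and then verify that the predictability of $H^{(i)}$ (needed when passing from simple to general integrands) interacts correctly with the conditioning on $\F^{Y^{(i)}}_T\vee\F_s$. No genuinely delicate estimates are required, because once the Cox property is conditioned upon, the compensator formula is the classical one for marked Poisson processes.
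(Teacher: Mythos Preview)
Your proposal is correct and follows the same characterization as the paper: both identify the compensator by verifying the expectation identity $\esp{\int\!\int H\,m^{(i)}}=\esp{\int\!\int H\,\lambda^{(i)}F^{(i)}\,\ud t}$ for nonnegative predictable integrands. The only difference is that the paper simply invokes \citet[Theorem T3, Chapter VIII]{bremaud1981} for this identity, whereas you sketch its proof from scratch via a monotone class reduction and the conditional marked-Poisson structure; your added detail is sound and amounts to unpacking the cited reference.
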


\begin{proof}
This is a consequence of the fact that for every nonnegative, predictable and $[0,+\infty)$-indexed process
$\{H(t,z), \ z \in  [0,+\infty), t\in[0,T]\}$, we have
\begin{equation}\label{n1}
\mathbb{E}\biggl[\int_0^T\int_0^{+\infty} H(t,z)\,m^\i(\ud t,\ud z)\biggr]=\mathbb{E}\biggl[\int_0^T\int_0^{+\infty}  H(t,z)\lambda^\i(t, Y^\i_t)F^\i(\ud z)\ud t \biggr], \quad i=1,2,
\end{equation}
see \citet[Theorem T3, Chapter VIII]{bremaud1981}.
\end{proof}

The insurance company invests premia received by the policyholders in a financial market where a riskless asset and a risky asset are negotiated and subscribes reinsurance contracts for each business line. Securities are traded continuously on the time interval $[0,T]$ and their price processes $B=\{B_t,\ t \in [0,T]\}$ and $P=\{P_t,\ t \in [0,T]\}$ follow the dynamics
\begin{equation}
\ud B_t = r(t) B_t \ud t, \quad B_0=1,
\end{equation}
where $r:[0,T] \to \R$ is an integrable function (i.e. $\int_0^T |r(s)| \ud s<\infty$) representing the interest rate, and
\begin{equation}\label{eq:P1}
\ud P_t = P_{t-}\left(\mu(t) \ud t + \sigma(t) \ud W_t - \int_0^{+\infty}K(t, z) m^\b(\ud t, \ud z)\right), \quad P_0 = p_0 >0,
\end{equation}
where $W=\{W_t,\ t \in [0,T]\}$ is a standard Brownian motion independent of $W^\a, W^\b, C^\a$ and $C^\b$, and the measurable functions $\mu:[0,T] \to \R$ and $\sigma:[0,T]\to (0, +\infty)$  are such that
\[
\int_0^T \left(|\mu(t)|+ \sigma^2(t)\right) \ud t <\infty,
\]
and $K:[0,T]\times [0,+\infty)\to [0,1)$. The latter, i.e. $K(t,z)\in [0,1)$,  ensures that $P$ stays positive. Moreover, because of condition \eqref{eq:integrab_lambda} and the fact that $K(t, z)\in [0,1)$, it holds that
\begin{align*}
&\bE\left[\int_0^T\int_0^{+\infty} K(t, z) \lambda^\b(t, Y^\b_t)F^\b(\ud z) \ud t\right]<\infty.
\end{align*}

The model for the price dynamics has the following interpretation: when a catastrophic claim arrives, it instantaneously affects the asset price which jumps downwards of an amount that depends on the size of the claim. Common shocks naturally induce dependence between the financial and the pure insurance frameworks.

\begin{notation}\label{notation:B}
  In the sequel we use the notation $B(t_1,t_2)=e^{\int_{t_1}^{t_2} r(s) \ud s}$, for $0\leq t_1<t_2\leq T$, to identify the accumulation factor, and hence $B(0,t)=B_t$ is the price of the riskless asset at time $t$. We also set $\overline{B}:=e^{\int_{0}^{T} |r(s)| \ud s}$.
\end{notation}


\section{The insurance and investment problem} \label{sec:problem}

The gross risk premium rate of each business line is  described by a nonnegative predictable process $\{c^\i(t, Y^\i_{t}), \ t \in [0,T] \}$, for $i=1,2$, where  $c^\i:[0,T]\times \R \to (0, + \infty)$ are measurable functions such that
\begin{equation}
\label{eqn:cpremium_int}
\mathbb{E}\left[\int_0^T c^\i(t, Y^\i_t)\, \ud t\right]< \infty,\quad i=1,2.
\end{equation}

To partially cover for the losses, the insurance company buys {\em proportional} reinsurance contracts with retention level $u\in[0,1]$, that is, retained losses are modeled via functions $g^\i(z,u)=zu$, for $i=1,2$,  where $1-u$ represents the percentage of losses covered by the reinsurance\footnote{It is well known that proportional reinsurance satisfies the properties of self-insurance functions (see, e.g. \citet[Chapter 4]{schmidli2007stochastic}).}.
We assume that the insurance company continuously buys reinsurance agreements whose reinsurance premium processes are defined as follows.

\begin{definition}[Reinsurance premia]\label{def:reinsurance_premium}
For $i = 1,2$, let the functions $q^\i:[0,T]\times \R \times [0,1] \to [0, +\infty)$ be such that $q^\i(t,y,u)$ is continuous with continuous first and second derivatives with respect to $u$ and such that
\begin{itemize}
\item[(i)] $q^\i(t,y,1)=0$, for every  $(t,y)\in[0,T]\times\R$, meaning that a null reinsurance is not expensive;
\item[(ii)] $q^\i(t,y,0)>c^\i(t, y)$, for each $(t,y)\in[0,T]\times\R$, which prevents the insurance company from making a risk-free profit;
\item[(iii)]  $\frac{\partial q^\i}{\partial u}(t,y,u)\leq 0$ for all $(t,y,u)\in [0,T]\times \R \times [0,1]$, since premia are decreasing with respect to the retention level (i.e. increasing with respect to the protection).
\end{itemize}

The derivatives $\frac{\partial q^\i}{\partial u}(t,y,0)$ and $\frac{\partial q^\i}{\partial u}(t,y,1)$ are interpreted as right and left derivatives, respectively.

Given a reinsurance strategy $\{u^\i_t,\ t \in [0,T]\}$ for $i=1,2$,  the associated reinsurance premium process is given by $\{q^\i(t, Y^\i_t, u^\i_t), \ t \in [0,T]\}$.
\end{definition}

We also assume
\begin{equation} \label{eqn:qpremium_int}
\mathbb{E}\biggl[\int_0^T q^\i(t,Y^\i_t, 0)\,\ud t\biggr]< \infty, \quad i=1,2,
\end{equation}
which means that the total expected premium payment in case of full reinsurance does not explode. 

\begin{example} \label{es:premi}
Here, we propose two forms for the insurance and the reinsurance premia which represent an extension to the stochastic case of the classical expected value and variance premium calculation principles. We suppose that $Z^\i$ denotes a random variable having distribution function $F^\i$, $ i=1,2$.
\begin{itemize}
\item[(1)] Under the {\em expected value principle}, we can choose $c^\i(t, Y^\i_t)$ and $q^\i(t, Y^\i_t, u^\i_t)$ as
\begin{align} \label{eq:expected_value_prop}
c^\i(t,y^\i)& =  (1+\theta^\i)\mathbb{E}[Z^\i]\lambda^\i(t,y^\i), \\
q^\i(t, y^\i, u^\i)& = (1+\theta_R^\i)\int_0^{+\infty} z(1 - u^\i_t) \lambda^\i(t,y^\i) F^\i(\ud z),
\end{align}
for $i=1,2,$ with $\theta_R^\i > \theta^\i>0$ being the safety loading of the reinsurance and insurance companies, respectively (see \citet{brachetta2019excess,brachetta2019proportional}).
\item[(2)] Under the {\em variance premium principle}, we can take $c^\i(t, Y^\i_t)$ and $q^\i(t, Y^\i_t, u^\i_t)$
with
\begin{align}
c^\i(t,y^\i) &=   \mathbb{E}[Z^\i]\lambda^\i(t,y^\i) + \theta^\i  \mathbb{E}[{Z^\i}^2] \lambda^\i(t,y^\i), \label{eq:variance_value1}\\
q^\i(t, y^\i, u^\i) &=  \int_0^{+\infty}  z(1 - u^\i_t) \lambda^\i(t,y^\i) F^\i(\ud z) + \theta_R^\i  \int_0^{+\infty}  z^2(1 - u^\i_t)^2 \lambda^\i(t,y^\i) F^\i(\ud z) \label{eq:variance_value2}
\end{align}
for $i=1,2$.
\end{itemize}
In both cases the expected cumulative premia are strictly greater than the expected cumulative losses. Indeed, from \eqref{n1}, for each $t \in [0,T]$ and for $i=1,2$
\begin{align}
\mathbb{E}\biggl[ \int_0^t c^\i(s,Y_s^\i) \ud s \biggr] > \mathbb{E}[ C^\i_t] = \mathbb{E}\left[ \int_0^t \int_0^{+\infty} z m^\i(\ud s, \ud z)\right] = \mathbb{E}[Z^\i]  \mathbb{E}\left[\int_0^t  \lambda^\i(s,Y^\i_s) \ud s \right]
\end{align}
and for any predictable reinsurance strategy  $\{u^\i_t, t \in [0,T]\}$ we have
\begin{align}
&\mathbb{E}\biggl[ \int_0^t q^\i(s, Y^\i_s, u^\i_s)\ud s \biggr] >\mathbb{E}\biggl[ \int_0^t \int_0^{+\infty} z(1-u^\i_s)\lambda^\i(s,Y^\i_s) F^\i(\ud z) \biggr]\\
&= \mathbb{E}\biggl[ \int_0^t \int_0^{+\infty} z(1-u^\i_s) m^\i(\ud s, \ud z) \biggr]  = \mathbb{E}\biggl[ \sum_{n=1}^{N_t^\i} Z_n^\i\left(1 - u^\i_{T_n^\i}\right)\biggr] ,
\end{align}
for every $t \in [0,T]$ and for $i=1,2$.
\end{example}

The total surplus (or total reserve) process is $R=\{R_t:=R^{\a,u^\a}_t+R^{\b,u^\b}_t, \ t \in [0,T]\}$, where for each $i=1,2$, $R^{\i, u^\i}$ is the surplus process associated to each line of business with a given reinsurance strategy $\{u^\i_t, \ t\in [0,T]\}$ satisfying
\begin{equation}
\label{eqn:surplus_process}
\ud R^{\i,u^\i}_t = \left[c^\i(t, Y^\i_t)-q^\i(t, Y^\i_t, u^\i_t)\right]\ud t - \int_0^{+\infty}zu_t^\i\,m^\i(\ud t,\ud z),
\end{equation}
and $R_0\in\mathbb{R}^+$ is the initial wealth of the insurance company.

Let $\{u^\i_t, \ t\in [0,T]\}$, for $i=1,2$, be the dynamic retention levels corresponding to each reinsurance contract and let  $\{w_t, \ t\in [0,T]\}$ be the total amount invested in the risky asset. We assume that for every $t \in [0,T]$, $w_t$ takes values in $\R$, which means that short-selling and borrowing from the bank account are allowed.  Then, the wealth of the insurance company associated with the reinsurance-investment strategy $\alpha=\{\alpha_t:= (u^\a_t, u^\b_t, w_t), \ t \in [0,T]\}$  satisfies
\begin{equation}
\label{eqn:wealth_proc}
\ud X^\alpha_t = \ud R^{\a,u^\a }_t+\ud R^{\b,u^\b}_t+ w_t \frac{\ud P_t}{P_t} + (X^\alpha_t - w_t) \frac{\ud B_t}{B_t}, \qquad X^\alpha_0=R_0\in\mathbb{R}^+.
\end{equation}

\begin{remark}
By expanding equation \eqref{eqn:wealth_proc} we get that
\begin{align}
\ud X^\alpha_t=& \mu^X_t \ud t +w_t \sigma(t) \ud W_t - \int_0^{+\infty} zu_t^\a m^\a(\ud t,\ud z) -   \int_0^{+\infty} \left(zu_t^\b + w_t K(t,z)\right) m^\b(\ud t,\ud z), \label{eq:wealth_2}
\end{align}
with $X^\alpha_0=R_0\in\mathbb{R}^+$, where  $ \mu^X_t = \mu^X(t, X^\alpha_t,Y^\a_t, Y^\b_t, \alpha_t)$ is given by
\begin{align}
& \mu^X(t, x,y^\a,y^\b, \alpha)  = \mu^X(t, x,y^\a,y^\b, u^\a, u^\b, w)\\
& \qquad :=c^\a(t, y^\a)+c^\b(t, y^\b)-\left(q^\a(t, y^\a, u^\a)+q^\b(t,y^\b, u^\b)\right)+w \left(\mu(t)-r(t)\right)+x r(t).
\end{align}
Moreover,  it is easy to verify that
\begin{align}
&X^\alpha_t =  R_0 B(0,t) + \int_0^t B(s,t)  \left( c^\a(s, Y^\a_s)+c^\b(s, Y^\b_s)- q^\a(s, Y^\a_s, u^\a_s)- q^\b(s,Y^\b_s, u^\b_s)\right) \ud s\\
&\quad +\int_0^t B(s,t) w_s \left(\mu(s)-r(s)\right)\ud s + \int_0^t{B(s,t) w_s\sigma(s)\,\ud W_s}
-\int_0^t\int_0^{+\infty} B(s,t) zu_s^\a \,m^\a(\ud s, \ud z) \\
& \quad - \int_0^t\int_0^{+\infty} B(s,t) [zu_s^\b + w_s K(s,z) ] \,m^\b(\ud s, \ud z), \label{eqn:wealth_sol}
\end{align}
for every $t \in [0,T]$, where $B(s,t)$, $0 \leq s < t \leq T$,  is defined in Notation \ref{notation:B}.
\end{remark}

The goal of the insurance company is to maximize the expected utility from its terminal wealth, that is, to solve the optimization problem
\begin{align}\label{pb:optimization}
\sup_{\alpha \in\mathcal{A}}{\mathbb{E}\bigl[U(X_T^\alpha)\bigr]},
\end{align}
where $U(x)=1 - e^{-\gamma x}$ represents the exponential preferences of the insurance company, with risk aversion parameter $\gamma>0$, and $\mathcal A$ denotes the class of admissible reinsurance-investment strategies defined as follows.


\begin{definition}\label{admi}
A strategy $\alpha=\{\alpha_t= (u^\a_t, u^\b_t, w_t), \ t \in [0,T]\}$ with values in $[0,1] \times [0,1] \times \R$ is said to be {\em admissible} if it is predictable and satisfies $\displaystyle \mathbb{E}\left[ e^{-\gamma X^\alpha_T} \right]< \infty$,
\begin{align}
&\mathbb{E} \left[\int_0^T \left(|w_s| \left(|\mu(s) - r(s)|+ \lambda^\b(s, Y^\b_s) \right) + w^2_s \sigma^2(s) \right)\ud s \right] < \infty,\label{eq:integrab_strategy}\\
&\mathbb{E} \left[\int_0^T e^{\gamma \overline{B} (z+|w_s|)} \lambda^\b(s, Y^\b_s) F^\b(\ud z) \ud s\right] < \infty,\label{eq:integrab_strategy2}
\end{align}
where $\overline{B}$ is defined in Notation \ref{notation:B}.
\end{definition}
Condition \eqref{eq:integrab_strategy2} guarantees that the Verification theorem (see Theorem \ref{Verification} below) applies. Nevertheless, this condition is satisfied by the optimal strategy under suitable assumptions on the model coefficients.

Now, we assume that the following assumptions are in force throughout the paper; they provide a set of sufficient conditions for a strategy $\alpha$ to be admissible (see Proposition \ref{ADM}).

\begin{assumption}\label{ASS}
\begin{itemize}
\item[]
\item[(i)]
There is an integrable function  $\delta : [0,T] \to (0,+ \infty)$ such that for $i=1,2,$
\begin{align}
q^\i(t, y^\i,0) \leq \delta(t) \quad    \quad \lambda^\i(t,y^\i) \leq \delta(t), \quad  \mbox{ for all } \  (t,y^\i) \in [0,T] \times \R. \end{align}
\item[(ii)] For $i=1,2,$ it holds that
\begin{align}
&\mathbb{E}\left[ e^{2\gamma Z^\i \overline{B}} \right]< \infty, \quad
\mathbb{E}\left[ {Z^\i}^2 e^{\gamma Z^\i \overline{B}} \right]< \infty.
\end{align}
\item[(iii)] There is a constant $C>0$ such that
$\frac{|\mu(t) - r(t)|}{\sigma(t)} \leq C$, for all $t \in [0,T]$.
\end{itemize}
\end{assumption}

\begin{proposition}\label{ADM}
Let $\alpha=\{\alpha_t= (u^\a_t, u^\b_t, w_t), \ t \in [0,T]\}$ be a predictable strategy with values in $[0,1] \times [0,1] \times \R$. Assume that 
there is a square integrable function $\eta: [0,T] \to (0,+ \infty)$ such that
\begin{align}\label{w_ass}
|w_t| \leq \eta(t),  \quad t  \in [0,T], \ \P-\mbox{a.s.},
\end{align}
and
\begin{gather}
\label{eq:delta1}
\int_0^T \Big(\eta(s) (|\mu(s) - r(s)|+ \delta(s))+ \eta^2(s) \sigma^2(s) \Big) \ud s <  \infty,\\
\label{eq:delta2}
\int_0^T   \delta(s) e^{ 2 \gamma \eta(s) \overline{B}} \ud s < \infty.
\end{gather}
Then, $\alpha$ is an admissible strategy, i.e. $\alpha  \in \mathcal A$.
\end{proposition}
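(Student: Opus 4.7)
Of the three integrability requirements in Definition \ref{admi}, conditions \eqref{eq:integrab_strategy} and \eqref{eq:integrab_strategy2} follow directly from the pointwise estimates $|w_s|\le\eta(s)$, $\lambda^\b(s,Y_s^\b)\le\delta(s)$ (Assumption \ref{ASS}(i)) and $|\mu-r|\le C\sigma$ (Assumption \ref{ASS}(iii)): the first reduces at once to hypothesis \eqref{eq:delta1}, and the second, after a Fubini integration of $e^{\gamma\overline B z}$ against $F^\b(\ud z)$ (finite by Assumption \ref{ASS}(ii)), reduces to the finiteness of $\int_0^T\delta(s)e^{\gamma\overline B\eta(s)}\,\ud s$, which is dominated by \eqref{eq:delta2}. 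The substantial work lies in the exponential-moment bound $\bE[e^{-\gamma X_T^\alpha}]<\infty$.

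Starting from the closed form \eqref{eqn:wealth_sol}, I would decompose $-\gamma X_T^\alpha=D+A_4+A_5+A_6$, where $D$ collects the drift-type terms ($R_0$, $c^\i-q^\i$, $w_s(\mu-r)$), $A_4:=-\gamma\int_0^T B(s,T)w_s\sigma(s)\,\ud W_s$ is the Brownian part, and $A_5,A_6$ are the jump integrals against $m^\a,m^\b$. Using $0\le c^\i<q^\i(t,y,0)\le\delta(t)$ (Assumption \ref{ASS}(i) combined with Definition \ref{def:reinsurance_premium}(ii)), $|w_s|\le\eta(s)$ and \eqref{eq:delta1}, the random variable $|D|$ is a.s.\ bounded by an explicit finite deterministic constant. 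Since $u_s^\i\in[0,1]$ and $|w_s K(s,z)|\le\eta(s)$ (because $K\in[0,1)$), one obtains pointwise
\[
A_5\le B_1:=\gamma\overline B\int_0^T\!\!\int_0^{+\infty}z\,m^\a(\ud s,\ud z),\qquad A_6\le B_2:=\gamma\overline B\int_0^T\!\!\int_0^{+\infty}(z+\eta(s))\,m^\b(\ud s,\ud z).
\]
Cauchy--Schwarz then reduces the task to proving $\bE[e^{2A_4}]<\infty$ and $\bE[e^{2(B_1+B_2)}]<\infty$ separately.

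For the diffusion moment I would invoke the pathwise identity $e^{2A_4}=\doleans{M}_T\exp(\tfrac12\langle M\rangle_T)$ with $M:=-4\gamma\int_0^\cdot B(s,T)w_s\sigma(s)\,\ud W_s$: since $|w_s|\le\eta(s)$ and $B(s,T)\le\overline B$, $\langle M\rangle_T$ is a.s.\ dominated by a deterministic constant via \eqref{eq:delta1}, and the supermartingale property of the Dol\'eans exponential $\doleans{M}$ delivers the bound. For the compound-Cox moment I would condition on $\F_T^{Y^\a}\lor\F_T^{Y^\b}$: by the Cox construction $m^\a$ and $m^\b$ become conditionally independent Poisson random measures with deterministic intensities $\lambda^\i(s,Y_s^\i)F^\i(\ud z)\,\ud s$, and the Poisson Laplace functional yields closed-form conditional exponentials, in particular $\bE[e^{2B_1}\mid\F_T^{Y^\a}]=\exp\bigl(\int_0^T\!\!\int_0^{+\infty}(e^{2\gamma\overline B z}-1)\lambda^\a(s,Y_s^\a)F^\a(\ud z)\,\ud s\bigr)$, and the analogous formula for $B_2$ with exponent $2\gamma\overline B(z+\eta(s))$. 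Bounding $\lambda^\i\le\delta$, and using $\bE[e^{2\gamma\overline B Z^\i}]<\infty$ (Assumption \ref{ASS}(ii)) together with \eqref{eq:delta2}, both conditional exponentials are a.s.\ dominated by deterministic constants, after which the outer expectation is finite.

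\textbf{Main obstacle.} The delicate point is the compound-Cox moment: the factors of $2$ in Assumption \ref{ASS}(ii), in the exponent of \eqref{eq:delta2}, and in the conjugate exponent of the Cauchy--Schwarz split are precisely calibrated so that the Poisson Laplace transforms collapse into finite deterministic integrals; any more aggressive split would violate the available $Z^\i$-moments.
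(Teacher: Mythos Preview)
Your argument is correct and the overall architecture---deterministic control of the drift, a Cauchy--Schwarz split into a Gaussian factor and a pure-jump factor, then separate moment bounds---coincides with the paper's. The execution of the last two steps, however, genuinely differs. For the Brownian piece, the paper performs an explicit Girsanov change to an equivalent measure $\Q$ and then bounds $\mathbb{E}^\Q[L_T^{-2}]$ together with a stochastic exponential under $\Q$; your route via the identity $e^{N_T}=\doleans{N}_T\,e^{\langle N\rangle_T/2}$ combined with the deterministic bound on $\langle N\rangle_T$ from \eqref{eq:delta1} is more elementary and avoids the measure change entirely (minor slip: the integrand of $M$ should carry the factor $-2\gamma$, not $-4\gamma$, so that $M_T=2A_4$). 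For the jump piece, the paper handles $m^\a$ and $m^\b$ by two different devices---the compound-Poisson moment generating function for $m^\a$, and an It\^o-for-jumps computation followed by Gronwall's lemma for $m^\b$---whereas your unified Cox-conditioning approach (condition on $\F_T^{Y^\a}\vee\F_T^{Y^\b}$, apply the Poisson Laplace functional to each conditionally-Poisson random measure, bound the resulting exponents by $\delta(\cdot)$ and Assumption~\ref{ASS}(ii), and integrate using \eqref{eq:delta2}) is cleaner and treats both jump terms symmetrically. Both routes deliver the same finiteness conclusions, and your reading of why the factor $2$ in Assumption~\ref{ASS}(ii) and in \eqref{eq:delta2} is exactly what the Cauchy--Schwarz doubling requires is on point.
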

The proof of Proposition \ref{ADM} is postponed to Appendix \ref{app:proofs}.

\section{The value function} \label{sec:valuef}

We define the value function corresponding to the optimization problem \eqref{pb:optimization} as
 \begin{align}\label{eq:valuefunction}
V(t,y^\a, y^\b, x):=\inf_{\alpha \in\mathcal{A}}{\mathbb{E}^{t,y^\a,y^\b,x}\bigl[ e^{-\gamma X_T^\alpha} \bigr]}, \quad (t,y^\a, y^\b, x) \in [0,T) \times \R^3,
\end{align}
where the notation $\mathbb{E}^{t,y^\a,y^\b,x}[\cdot]$ stands for the conditional expectation given $Y^\a_t=y^\a, Y^\b_t=y^\b$ and $X^\alpha_t=x$.

If $V$ is sufficiently smooth, by classical control theory results it can be characterized as the solution of the Hamilton-Jacobi-Bellman (in short HJB) equation
 \begin{equation}\label{eq:bellman}
\frac{\partial V}{\partial t}(t,y^\a, y^\b, x)+\inf_{\alpha \in [0,1]\times[0,1] \times \R}\mathcal{L}^{\alpha}_tV(t,y^\a, y^\b, x)=0, \quad   (t,y^\a, y^\b, x) \in [0,T) \times \R^3, \\
\end{equation}
with the final condition
\begin{equation}\label{eq:final_condition}
V(T,y^\a, y^\b, x) =  e^{-\gamma  x}, \quad  (y^\a, y^\b, x) \in \R^3,
\end{equation}
where, for any constant control $\alpha = (u^\a, u^\b, w) \in [0,1] \times  [0,1]  \times \R$, the operator $\mathcal L_t^\alpha$ denotes the infinitesimal generator of the process $(Y^\a, Y^\b, X^\alpha)$ which satisfies
\begin{align}
&\mathcal{L}^\alpha_tf(y^\a, y^\b, x)\\
& \ = \frac{\partial f}{\partial y^\a}b^\a(t)+\frac{\partial f}{\partial y^\b} b^\b(t)+ \frac{\partial f}{\partial x} \mu^X(t, x,y^\a,y^\b, \alpha)+\frac{1}{2}\frac{\partial^2 f}{\partial {y^\a}^2}({a^\a})^2(t)+\frac{1}{2}\frac{\partial^2 f}{\partial {y^\b}^2} ({a^\b})^2(t) \\
& \ +\frac{1}{2} \frac{\partial^2 f}{\partial x^2}w^2 \sigma^2(t)+
\int_0^{+\infty} \left(f(y^\a, y^\b, x - zu^\a) -f(y^\a, y^\b, x)\right) \lambda^\a(t, y^\a) F^\a(\ud z) \\
& \  + \int_0^{+\infty} \left(f(y^\a, y^\b, x - zu^\b - w K(t,z)) -f(y^\a, y^\b, x)\right) \lambda^\b(t,y^\b) F^\b(\ud z),
\end{align}
for every function $f$ which is $\mathcal C^2$ in $(y^\a, y^\b, x)\in \R^3$.

We will characterize the value function as the unique classical solution of the HJB equation\footnote{A classical solution is a function $v(t, y^\a, y^\b)$ that is $\mathcal C^1$ in $t$ and $\mathcal C^2$ in $(y^\b, y^\b, x)$ on $[0,T]\times \R^3$ and which satisfies equation \eqref{eq:bellman} with final condition \eqref{eq:final_condition}.} following a guess-and-very approach.
We consider the function
\begin{equation}\label{eq:ansatz}
V(t,y^\a, y^\b, x) = e^{ - \gamma x B(t,T)}  \psi (t,y^\a,y^\b),
\end{equation}
where $\psi$ is a function that does not depend on $x$. Plugging \eqref{eq:ansatz} into the HJB equation \eqref{eq:bellman} leads to the following reduced HJB equation
\begin{align}
&\ds \frac{\partial \psi}{\partial t}(t,y^\a, y^\b) + \mathcal{L}_t^{Y^\a,Y^\b}\psi(t,y^\a, y^\b) \\
& +  \psi(t,y^\a, y^\b) \Big ( \inf_{u^\a \in [0,1]} \Psi_1(t,y^\a,u^\b)
+ \inf_{u^\b \in [0,1],\ w \in \R} \Psi_2(t,y^\b,u^\b,w) \Big )=0,\label{eq:bellman1}
\end{align}
for all $(t, y^\a, y^\b) \in [0,T) \times \R^2$ with the terminal condition
\begin{equation}\psi(T,y^\a, y^\b) = 1, \quad (y^\a, y^\b) \in \R^2,\label{eq:final_condition1}\end{equation}
where $\mathcal{L}_t^{Y^\a,Y^\b}$ denotes the infinitesimal generator of the process $(Y^\a,Y^\b)$ given by
\begin{equation}
\mathcal{L}_t^{Y^\a,Y^\b}f(y^\a, y^\b)= \frac{\partial f}{\partial y^\a}b^\a(t)+\frac{\partial f}{\partial y^\b} b^\b(t)+\frac{1}{2}\frac{\partial^2 f}{\partial {y^\a}^2}({a^\a})^2(t)+\frac{1}{2}\frac{\partial^2 f}{\partial {y^\b}^2} ({a^\b})^2(t),
\end{equation}
and the functions $\Psi_1$ and $\Psi_2$ are respectively given by
\begin{align}
&\Psi_1(t,y^\a,u^\a)   \\
& \quad = \gamma B(t,T) \left(q^\a(t, y^\a, u^\a) - c^\a(t, y^\a)\right)  +  \int_0^{+\infty}  \biggl ( e^{\gamma  B(t,T)  zu^\a} - 1 \biggr ) \lambda^\a(t, y^\a) F^\a(\ud z)\label{psi1}\\
&\Psi_2(t,y^\b,u^\b,w)  \\
& \quad  = \gamma B(t,T) \biggl[\left(q^\b(t, y^\b, u^\b) - c^\b(t, y^\b)\right) + \frac{1}{2}\gamma B(t,T) \sigma^2(t)w^2  - w(\mu(t)  - r(t)) \biggr]\\
&\quad \quad  + \int_0^{+\infty}  \biggl ( e^{\gamma  B(t,T)(zu^\b + w K(t,z))}  - 1  \biggr ) \lambda^\b(t,y^\b) F^\b(\ud z).\label{psi2}
\end{align}

We start with a verification result.

\begin{theorem}[Verification Theorem]\label{Verification}
Let $\psi \in \mathcal{C}^{1,2,2}((0,T)\times \R^2) \cap \mathcal{C}([0,T]\times \R^2)$ be a classical solution of the HJB equation \eqref{eq:bellman1} with the final condition \eqref{eq:final_condition1}.
Then, the function $V(t,y^\a, y^\b, x)$ given in equation \eqref{eq:ansatz}
is the value function of \eqref{eq:valuefunction}.
Furthermore, let ${u^\a}^*(t,y^\a)$ and  $({u^\b}^*(t,y^\b), w^*(t,y^\b))$ be minimizers of  $\inf_{u^\a \in [0,1]} \Psi_1(t,y^\a,u^\b)$, and
 $\inf_{u^\b \in [0,1],\ w \in \R} \Psi_2(t,y^\b,u^\b,w) $, respectively. Then, $\alpha^*_t = ({u^\a}^*(t, Y^\a_t), {u^\b}^*(t,Y^\b_t), w^*(t,Y^\b_t)) \in\mathcal{A}$ is an optimal strategy.
\end{theorem}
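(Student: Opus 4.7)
The plan is to follow the standard verification scheme: reduce via the given ansatz, apply Itô's formula with jumps to turn the HJB equation into a submartingale inequality along an arbitrary admissible strategy, and then attain equality along the candidate minimizer-based strategy $\alpha^*$.

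\textbf{Step 1 (Ansatz reduction).} First, I would record that inserting $V(t,y^\a,y^\b,x)=e^{-\gamma x B(t,T)}\psi(t,y^\a,y^\b)$ into \eqref{eq:bellman} produces exactly the reduced HJB \eqref{eq:bellman1}. The key cancellation is that $\partial_t B(t,T)=-r(t)B(t,T)$ generates a term $\gamma x r(t)B(t,T)V$ from $\partial_t V$, which is exactly cancelled by the $xr(t)$ piece of $\mu^X$ appearing through $\partial_x V=-\gamma B(t,T)V$. All remaining contributions factor out $V$ multiplicatively, and dividing by $e^{-\gamma xB(t,T)}$ leaves $\partial_t\psi+\mathcal{L}^{Y^\a,Y^\b}_t\psi+\psi(\Psi_1+\Psi_2)=0$ with the matching terminal value $\psi(T,\cdot)=1$, as claimed.

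\textbf{Step 2 (Submartingale inequality).} Fix $(t,y^\a,y^\b,x)$ and an admissible $\alpha\in\mathcal{A}$. Applying Itô's formula with jumps to $V(s,Y^\a_s,Y^\b_s,X^\alpha_s)$ on $[t,T]$ using the dynamics \eqref{def:y1}--\eqref{def:y2} and \eqref{eq:wealth_2}, I obtain
\begin{equation}
V(T,Y^\a_T,Y^\b_T,X^\alpha_T)=V(t,y^\a,y^\b,x)+\int_t^T\bigl(\partial_sV+\mathcal{L}^{\alpha_s}_sV\bigr)\ud s+M_T-M_t,
\end{equation}
where $M$ is a local martingale built from the three Brownian integrals and the compensated-measure integrals against $\widetilde m^\a$ and $\widetilde m^\b$. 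Since $\psi$ solves \eqref{eq:bellman1}, the HJB inequality $\partial_sV+\mathcal{L}^{\alpha_s}_sV\geq 0$ holds identically. I would then localize with $\tau_n\uparrow T$, take expectations, pass to the limit by dominated convergence, and use the terminal condition to obtain $V(t,y^\a,y^\b,x)\leq \bE^{t,y^\a,y^\b,x}[e^{-\gamma X_T^\alpha}]$. Taking the infimum over $\alpha\in\mathcal{A}$ shows that $V$ dominates the value function from below.

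\textbf{Step 3 (Attainment and main obstacle).} With $\alpha^*_s=({u^\a}^*(s,Y^\a_s),{u^\b}^*(s,Y^\b_s),w^*(s,Y^\b_s))$ the drift in the Itô expansion vanishes identically by the definition of the pointwise minimizers, so $M$ becomes a true martingale on $[t,T]$ and the equality $V(t,\ldots)=\bE^{t,\ldots}[e^{-\gamma X_T^{\alpha^*}}]$ is reached. This forces $V$ to coincide with the value function and $\alpha^*$ to be optimal, contingent on $\alpha^*\in\mathcal{A}$. The main obstacle is the uniform integrability/true-martingale step for the jump integrals: the jumps of $V$ along $m^\b$ are of order $\psi\cdot e^{-\gamma X_{s-}B(s,T)}\bigl(e^{\gamma B(s,T)(zu^\b_s+w_sK(s,z))}-1\bigr)$, and only condition \eqref{eq:integrab_strategy2} (combined with $K\in[0,1)$, $B(s,T)\leq\overline{B}$, and Assumption \ref{ASS}(ii)) furnishes an integrable envelope for the associated compensator; the analogous jumps along $m^\a$ are controlled by Assumption \ref{ASS}(ii) alone since $u^\a_s\in[0,1]$, while \eqref{eq:integrab_strategy} handles the Brownian components. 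Once these estimates are in place, the localization limit yields the inequality for arbitrary $\alpha$ and the equality along $\alpha^*$, completing the argument.
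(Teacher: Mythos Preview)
Your proposal is correct and follows essentially the same verification scheme as the paper: apply It\^o's formula with jumps, localize via stopping times (the paper takes $\tau_n$ to be the first exit of $(X^\alpha,Y^\a,Y^\b)$ from $[-n,\infty)\times[-n,n]^2$) to handle the Brownian and compensated-jump integrals using exactly the integrability conditions you identify, invoke the HJB inequality to obtain $V\le\bE^{t,y^\a,y^\b,x}[e^{-\gamma X^\alpha_T}]$ for every admissible $\alpha$, and attain equality along $\alpha^*$ where the drift vanishes. The only cosmetic differences are that the paper passes to the limit via uniform integrability of $\{V(T\wedge\tau_n,\ldots)\}_n$ rather than dominated convergence, and it also localizes for $\alpha^*$ instead of asserting that $M$ is a true martingale outright.
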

The proof of the Verification Theorem is provided in Appendix \ref{app:proofs}.

Next, we look for a solution to the HJB equation \eqref{eq:bellman1} of the form
$\psi (t,y^\a,y^\b) = \psi_1(t,y^\a) \psi_2(t,y^\b)$. Plugging this function into \eqref{eq:bellman1} allows to split the problem in the following two backward PDEs
\begin{align}\label{eq:bellman2}
&\frac{\partial \psi_1}{\partial t}(t,y^\a)+ \mathcal{L}_t^{Y^\a}\psi_1 (t,y^\a) +  \psi_1 (t,y^\a)\inf_{u^\a \in [0,1]} \Psi_1(t,y^\a,u^\a) =0
\end{align}
\begin{align}\label{eq:bellman3}
\frac{\partial \psi_2}{\partial t}(t, y^\b)+ \mathcal{L}_t^{Y^\b}\psi_2 (t, y^\b) + \psi_2 (t, y^\b) \inf_{(u^\b, w) \in [0,1] \times \R} \Psi_2(t,y^\b,u^\b,w) =0,
\end{align}
with the final conditions $\psi_1(T,y^\a) = \psi_2(T,y^\b)= 1$, where $\mathcal{L}_t^{Y^\a}$ and $\mathcal{L}_t^{Y^\b}$ denote the infinitesimal generators of the processes $Y^\a$ ad $Y^\b$, respectively, and given by
\begin{align}
\mathcal{L}_t^{Y^\i}f(y^\i)= \frac{\partial f}{\partial y^\i}b^\i(t)+\frac{1}{2}\frac{\partial^2 f}{\partial {y^\i}^2}({a^\i})^2(t), \quad i=1,2.
\end{align}

A backward PDE similar to \eqref{eq:bellman2} is studied in \citet{brachetta2019proportional}  in  the case of constant risk-less interest rate. In \citet[Corollary 8.3]{brachetta2019proportional} sufficient conditions on $c^\a(t, y^\a)$, $q^\a(t, y^\a, u^\a)$, $\lambda^\a(t, y^\a)$ and $a^\a(t)$ are given to ensure existence and uniqueness of a classical solution. The result is reported, for completeness in Proposition \ref{P1} in Section \ref{sec:existence}.  

Instead, the backward PDE \eqref{eq:bellman3}, which accounts for dependence between the insurance and the financial markets, to the best of our knowledge, has not been previously discussed in the literature.

We now consider the minimization problems
\begin{equation}\label{min0}
 \inf_{u^\a \in [0,1]} \Psi_1(t,y^\a,u^\a) , \quad (t,y^\a) \in [0,T] \times  \R,
 \end{equation}
\begin{equation}\label{min}
 \inf_{(u^\b, w) \in [0,1] \times \R} \Psi_2(t,y^\b,u^\b,w) , \quad (t,y^\b) \in [0,T] \times  \R,
 \end{equation}
where the functions $ \Psi_1(t,y^\a,u^\a)$ and $\Psi_2(t,y^\b,u^\b,w)$ are given in \eqref{psi1} and \eqref{psi2}, respectively. We study problems \eqref{min0} and \eqref{min} and discuss their solutions in the next section.

\section{Solution of the minimization problems (\ref{min0}) and (\ref{min})}\label{sec:solution}

We begin with the problem \eqref{min0}.
\begin{proposition}\label{MC}
Assume that $ \frac{\partial^2  q^\a}{\partial {u^\a}^2} (t, y^\a, u^\a) \geq 0,$   for all $ (t, y^\a, u^\a) \in [0,T] \times \R \times [0,1]$.
Then, for any $(t,y^\a) \in [0,T] \times \R$, the function $\Psi_1(t,y^\a,u^\a)$ is strictly convex with respect to $u^\a \in [0,1 ]$ and there is a unique measurable function $u^{\a^{*}} (t,y^\a)$ given by
\begin{equation} \label{reins1}
u^{\a^{*}} (t,y^\a) = \begin{cases}
0  & (t,y^\a) \in A_0 \\
1 & (t,y^\a) \in A_1 \\
\bar u^\a(t,y^\a) & (t,y^\a) \in (A_0\cup A_1)^c
\end{cases}
\end{equation}
which solves the problem \eqref{min0}, where $(A_0\cup A_1)^c$ denotes the complementary of the set $A_0\cup A_1\subset [0,T]\times \R$, the sets $A_0$ and $A_1$ are given by
\begin{align}& A_0 = \left\{ (t, y^\a) \in [0,T] \times \R: \frac{\partial  q^\a}{\partial {u^\a}}(t, y^\a,0) + \lambda^\a(t, y^\a) \mathbb{E}[Z^\a]\ \geq 0\right\},\\
&A_1 = \left\{ (t, y^\a) \in [0,T] \times \R: \frac{\partial  q^\a}{\partial {u^\a}}(t, y^\a,1) + \lambda^\a(t, y^\a) \mathbb{E}[Z^\a
e^{ \gamma  B(t,T) Z^\a } ] \leq 0\right\},
 \end{align}
and  $\bar u^\a(t,y^\a) \in (0,1)$ is the unique solution of the equation
\begin{equation}
 \frac{\partial  q^\a}{\partial {u^\a}}(t,y^\a, u^\a) + \lambda^\a(t,y^\a) \int_0^{+\infty} z  e^{\gamma B(t,T) z u^\a } F^\a(\ud z) =0.
 \end{equation}
\end{proposition}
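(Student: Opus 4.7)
The plan is to recognize this as a one-dimensional strictly convex minimization on the compact interval $[0,1]$ and to combine convex analysis with a measurable implicit-function argument. First I would differentiate $\Psi_1$ twice in $u^\a$ under the integral sign, which is justified by Assumption \ref{ASS}(ii) (giving $\mathbb{E}[(Z^\a)^2 e^{\gamma \overline{B} Z^\a}] < \infty$) together with the uniform bound $e^{\gamma B(t,T) z u^\a} \le e^{\gamma \overline{B} z}$ for $u^\a \in [0,1]$. The second derivative reads
\begin{equation}
\frac{\partial^2 \Psi_1}{\partial (u^\a)^2}(t,y^\a,u^\a) = \gamma B(t,T)\frac{\partial^2 q^\a}{\partial (u^\a)^2}(t,y^\a,u^\a) + \gamma^2 B(t,T)^2 \lambda^\a(t,y^\a) \int_0^{+\infty} z^2 e^{\gamma B(t,T) z u^\a} F^\a(\ud z).
\end{equation}
Under the hypothesis on $q^\a$, and because the integral term is strictly positive (as $F^\a$ is supported on $\R^+$ with positive second moment and claims are not identically zero), this second derivative is strictly positive. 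Hence $\Psi_1(t,y^\a,\cdot)$ is strictly convex on $[0,1]$ and admits a unique minimizer.

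Next I would perform the boundary analysis. Evaluating $\partial \Psi_1/\partial u^\a$ at $u^\a = 0$ and $u^\a = 1$ recovers, up to the positive factor $\gamma B(t,T)$, exactly the quantities defining $A_0$ and $A_1$ respectively. Since strict convexity forces $\partial \Psi_1/\partial u^\a$ to be strictly increasing in $u^\a$, one obtains $A_0 \cap A_1 = \emptyset$, and the three cases of the proposition follow at once: on $A_0$ the derivative is nonnegative throughout $[0,1]$, so the minimum is at $u^\a = 0$; on $A_1$ the derivative is nonpositive throughout $[0,1]$, so the minimum is at $u^\a = 1$; on $(A_0 \cup A_1)^c$ the derivative is strictly negative at $0$ and strictly positive at $1$, so by continuity and strict monotonicity in $u^\a$ the intermediate value theorem yields a unique zero $\bar u^\a(t,y^\a) \in (0,1)$, which satisfies the stated first-order equation and is the interior minimizer.

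Finally I would argue the measurability of $u^{\a^*}$. The sets $A_0$ and $A_1$ are jointly measurable in $(t,y^\a)$ because $\partial q^\a/\partial u^\a$ is continuous, $\lambda^\a$ is measurable, and the map $t \mapsto \mathbb{E}[Z^\a e^{\gamma B(t,T) Z^\a}]$ is continuous by dominated convergence. On $(A_0 \cup A_1)^c$ the interior selection can be represented as
\begin{equation}
\bar u^\a(t,y^\a) = \inf\Big\{u \in [0,1]:\ \frac{\partial q^\a}{\partial u^\a}(t,y^\a,u) + \lambda^\a(t,y^\a) \int_0^{+\infty} z\, e^{\gamma B(t,T) z u} F^\a(\ud z) \ge 0\Big\},
\end{equation}
which is measurable as the infimum over a jointly measurable superlevel set. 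I expect the only mildly delicate point to be the simultaneous justification of differentiation under the integral and of strict positivity of the second-moment integral uniformly in $u^\a \in [0,1]$; both are handled by Assumption \ref{ASS}(ii) and the non-degeneracy of the claim-size distribution. The remainder is a standard application of convex analysis on a compact interval.
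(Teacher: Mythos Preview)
Your proof is correct and follows the same underlying approach as the paper (strict convexity of $\Psi_1$ in $u^\a$ plus a boundary/first-order analysis). The paper's own proof is extremely terse: it simply observes that $A_0$ and $A_1$ coincide with the sets $\{\partial\Psi_1/\partial u^\a(t,y^\a,0)\ge 0\}$ and $\{\partial\Psi_1/\partial u^\a(t,y^\a,1)\le 0\}$, and then defers to Proposition~4.1 and Lemma~4.1 in \citet{brachetta2019proportional} for the convexity, the three-case classification, and the measurability of the minimizer. Your argument is essentially a self-contained reconstruction of what those cited results provide: you compute the second derivative explicitly, argue strict positivity of the integral term, deduce the disjointness of $A_0$ and $A_1$, and then supply an explicit measurable selection for $\bar u^\a$ via an infimum representation. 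The gain of your approach is that it does not rely on an external reference and makes the role of Assumption~\ref{ASS}(ii) in justifying differentiation under the integral sign transparent; the paper's approach is shorter but opaque unless one consults the cited source.
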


 \begin{proof}
We first observe that
\begin{align} &A_0 = \left\{ (t, y^\a) \in [0,T] \times \R:\ \frac{\partial  \Psi_1}{\partial {u^\a}} (t, y^\a,0) \geq 0\right\},\\
&A_1 = \left\{ (t, y^\a) \in [0,T] \times \R:\ \frac{\partial  \Psi_1}{\partial {u^\a}}(t, y^\a, 1) \leq 0\right\}.\end{align}
Then, the result follows by Proposition 4.1 and Lemma 4.1 in \citet{brachetta2019proportional}.
 \end{proof}

\begin{example} \label{x} We now discuss the optimal strategies for the reinsurance premia introduced in Example \ref{es:premi}.
\begin{itemize}
\item[(1)]
Under the {\em expected value principle}, for all $(t, y^\a, u^\a) \in [0,T] \times \R \times [0,1]$ we have that
 $$\frac{\partial q^\a}{\partial {u^\a}}(t, y^\a, u^\a) = - (1+\theta_R^\a)\lambda^\a(t, y^\a) \mathbb{E}[Z^\a].$$
In this case,  $ A_0=\emptyset$, which implies that full reinsurance is never optimal and $u^{\a^{*}} (t) = \min(1, \bar u^\a(t))$
with $\bar u^\a(t) \in (0, + \infty)$ being the unique solution of the equation
$$(1+\theta_R^\a) \esp{Z^\a} = \int_0^{+\infty} z  e^{\gamma B(t,T) zu^\a } F^\a(\ud z) =  \esp{Z^\a
e^{ \gamma  B(t,T) u^\a Z^\a } }.$$
\item[(2)]
Under the {\em variance principle}, for all $(t, y^\a, u^\a) \in [0,T] \times \R \times [0,1]$ we have that
 $$\frac{\partial q^\a}{\partial {u^\a}}(t, y^\a, u^\a) = - \lambda^\a(t, y^\a) \esp{Z^\a} - 2 \theta_R^\a \lambda^\a(t, y^\a) (1 - u^\a) \esp{(Z^\a)^2}.$$
Consequently, $ A_0= A_1= \emptyset$. In this case full and null reinsurances are never optimal and the optimal reinsurance $ u^{\a^{*}} (t) \in (0, 1)$ is the unique solution of the equation
\begin{equation}
\esp{Z^\a} + 2 \theta_R^\a(1- u^\a)  \esp{(Z^\a)^2} = \int_0^{+\infty} z  e^{\gamma B(t,T) z u^\a   } F^\a(\ud z).
\end{equation}
\end{itemize}
Note that in both cases the optimal reinsurance strategy is deterministic, that is,  it does not depend on the stochastic factor.
\end{example}

Now, we consider the  minimization problem \eqref{min}.
The next proposition shows that the function $\Psi_2$ is convex and hence a minimum exists. 

\begin{proposition}\label{convex}
Assume that $ \frac{\partial^2  q^\b}{\partial {u^\b}^2} (t, y^\b,u^\b)\geq 0,$  for all $(t, y^\b, u^\b) \in [0,T] \times \R \times [0,1]$.
Then, for all $(t,y^\b) \in [0,T] \times \R$, the function $\Psi_2(t,y^\b,u^\b,w)$ is  strictly convex in $(u^\b,w) \in [0,1 ] \times \R$.
\end{proposition}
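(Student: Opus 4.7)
My plan is to establish strict convexity of $\Psi_2(t,y^\b,\cdot,\cdot)$ on $[0,1]\times\R$ by computing its Hessian in the variables $(u^\b,w)$ and checking positive definiteness at every point. Setting $\beta:=\gamma B(t,T)>0$ and $e(z):=e^{\beta(zu^\b+wK(t,z))}>0$, differentiation under the integral sign (justified because the exponential is continuous in $(u^\b,w)$ and $K(t,z)\in[0,1)$) gives the second partial derivatives
\begin{align}
A &:= \frac{\partial^2\Psi_2}{\partial(u^\b)^2} = \beta\,\frac{\partial^2 q^\b}{\partial(u^\b)^2}(t,y^\b,u^\b) + \beta^2\lambda^\b(t,y^\b)\int_0^{+\infty} z^2 e(z)\,F^\b(\ud z), \\
D &:= \frac{\partial^2\Psi_2}{\partial w^2} = \beta^2\sigma^2(t) + \beta^2\lambda^\b(t,y^\b)\int_0^{+\infty} K(t,z)^2 e(z)\,F^\b(\ud z), \\
C &:= \frac{\partial^2\Psi_2}{\partial u^\b\partial w} = \beta^2\lambda^\b(t,y^\b)\int_0^{+\infty} zK(t,z) e(z)\,F^\b(\ud z).
\end{align}
The convexity hypothesis on $q^\b$, the strict positivity of $\lambda^\b$, and the fact that $\int_0^{+\infty}z^2 F^\b(\ud z)>0$ (claim sizes being nondegenerate positive variables) immediately give $A>0$.

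The crux is the strict inequality $AD-C^2>0$. I would introduce the positive finite measure $\mu(\ud z):=e(z)\lambda^\b(t,y^\b)F^\b(\ud z)$ on $[0,+\infty)$ and apply the Cauchy--Schwarz inequality to the pair of functions $z$ and $K(t,z)$ in $L^2(\mu)$:
\begin{align}
\Bigl(\int_0^{+\infty} zK(t,z)\,\mu(\ud z)\Bigr)^2 \leq \Bigl(\int_0^{+\infty} z^2\,\mu(\ud z)\Bigr)\Bigl(\int_0^{+\infty} K(t,z)^2\,\mu(\ud z)\Bigr).
\end{align}
Combining this with $A\geq \beta^2\int z^2\,\mu$ and $D=\beta^2\sigma^2(t)+\beta^2\int K^2\,\mu$ yields
\begin{align}
AD-C^2 &\geq \beta^4\sigma^2(t)\int_0^{+\infty}\! z^2\,\mu(\ud z) + \beta^4\Bigl[\int z^2\,\mu\cdot\int K^2\,\mu - \Bigl(\int zK\,\mu\Bigr)^2\Bigr] \\
&\geq \beta^4\sigma^2(t)\int_0^{+\infty}\! z^2\,\mu(\ud z) > 0,
\end{align}
using $\sigma(t)>0$. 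Positivity of $A$ and of the determinant $AD-C^2$ gives positive definiteness of the Hessian at every $(u^\b,w)\in[0,1]\times\R$, hence strict convexity of $\Psi_2$ on this domain.

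The main conceptual obstacle is locating the source of strictness: the assumption on $q^\b$ only yields weak convexity (second derivative nonnegative), so the pure insurance part of the Hessian could \emph{a priori} be merely semidefinite. Strict convexity is ultimately supplied by the genuine financial diffusion coefficient $\sigma(t)>0$, which enters $D$ but not the cross term $C$ and thereby produces the irreducibly positive contribution $\beta^4\sigma^2(t)\int z^2\,\mu$ in the determinant. Once this observation is made, Cauchy--Schwarz absorbs the remaining mixed jump term cleanly, with no collinearity hypothesis needed on $z$ and $K(t,z)$.
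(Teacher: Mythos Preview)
Your proposal is correct and follows essentially the same approach as the paper: compute the Hessian, observe the diagonal entries are positive, and use Cauchy--Schwarz (applied to $z$ and $K(t,z)$ against the measure $e(z)\lambda^\b F^\b(\ud z)$) to show the determinant is strictly positive, with the strictness supplied by the $\sigma^2(t)$ term in $\partial^2\Psi_2/\partial w^2$. Your write-up is slightly more explicit about isolating $\sigma^2(t)>0$ as the source of strict positivity, but the argument is the same.
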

\begin{proof}
To prove the result, we show that Hessian matrix of the function $\Psi_2$ is positive definite.
For any $(t,y^\b) \in [0,T] \times \R$, the second order derivatives of $\Psi_2$ are
\begin{align}
& \frac{\partial^2  \Psi_2}{\partial {u^\b}^2}(t,y^\b,u^\b,w)
=  \gamma B(t,T)  \frac{\partial^2  q^\b}{\partial {u^\b}^2}(t,y^\b,u^\b)\\
& \quad \quad +  \left(\gamma B(t,T) \right)^2 \lambda^\b(t, y^\b) \int_0^{+\infty} z^2 e^{\gamma B(t,T) ( zu^\b + w K(t,z)) } F^\b(\ud z), \label{eq:der3}
\end{align}
\begin{align}
& \frac{\partial^2  \Psi_2}{\partial  w \partial u^\b}(t,y^\b,u^\b,w) =  \left(\gamma B(t,T) \right)^2
\lambda^\b(t, y^\b) \int_0^{+\infty} z K(t,z)  e^{\gamma B(t,T) ( zu^\b + w K(t,z))} F^\b(\ud z), \label{eq:der4}
\end{align}
\begin{align}
& \frac{\partial^2  \Psi_2}{\partial  w^2}(t,y^\b,u^\b,w)=  \left(\gamma B(t,T) \right)^2
\biggl [  \sigma^2(t) +  \lambda^\b(t, y^\b) \int_0^{+\infty} K(t,z)^2 e^{\gamma B(t,T)( zu^\b + w K(t,z)) } F^\b(\ud z)   \biggr].\label{eq:der5}
 \end{align}
We note that \eqref{eq:der3} and \eqref{eq:der5} are positive, for each $(t,y^\b) \in [0,T] \times \R$.
Moreover, for every $(t,y^\b) \in [0,T] \times \R$,  by the Cauchy-Schwarz inequality, we have that
\begin{align}
&\left(\frac{\partial^2  \Psi_2}{\partial  w \partial u^\b} (t,y^\b,u^\b,w) \right)^2 \leq
\left(\gamma B(t,T) \right)^4 (\lambda^\b(t, y^\b))^2 \\
& \quad \times \int_0^{+\infty} z^2 e^{\gamma B(t,T)( zu^\b + w K(t,z))} F^\b(\ud z) 
\int_0^{+\infty} K(t,z)^2 e^{\gamma B(t,T)( zu^\b + w K(t,z))} F^\b(\ud z)\\
& \quad <  \frac{\partial^2  \Psi_2}{\partial  {u^\b}^2} (t,y^\b,u^\b,w) \ \frac{\partial^2  \Psi_2}{\partial  w^2} (t,y^\b,u^\b,w),
\end{align}
which implies that the determinant of the Hessian matrix of the function $\Psi_2$ is positive definite for all $(u^\b,w) \in [0,1 ] \times \R$.
\end{proof}

We now consider the function $q^\b(t,y^\b, u^\b)$, with the same properties of Definition \eqref{def:reinsurance_premium}, extended for $u^\b <0$ and $u^\b >1$. 
For all $(t, y^\b,  u^\b,w) \in [0,T] \times \R^3$ we introduce the functions
\begin{align}\label{H}
H(t,y^\b, u^\b,w) &= \lambda^\b(t,y^\b) \int_0^{+\infty} z  e^{\gamma B(t,T)( zu^\b  + w K(t,z)) } F^\b(\ud z) + \frac{\partial  q^\b}{\partial {u^\b}}(t,y^\b, u^\b),\\
\widetilde H (t,y^\b, u^\b,w)&= \gamma B(t,T) \sigma^2(t) w -\! (\mu(t) - r(t) )\\
 \label{eq:Htilde} &+ \lambda^\b(t, y^\b) \int_0^{+\infty} K(t,z) e^{\gamma B(t,T)( zu^\b + w K(t,z)) } F^\b(\ud z).
\end{align}

The following result is an auxiliary lemma which will be applied in the proof of Proposition \ref{existence} below, where the candidate optimal reinsurance strategy corresponding to the second line of business is characterized. 

\begin{lemma}\label{lemma:minimizer}
Assume that $\frac{\partial^2  q^\b}{\partial {u^\b}^2} (t, y^\b, u^\b)\geq 0,$ $(t, y^\b,  u^\b) \in [0,T] \times \R^2$ and that
 \begin{equation}\label{H1}
\lim_{u^\b \to  - \infty} H(t,y^\b, u^\b, w)  <0, \quad  \lim_{u^\b \to  + \infty} H(t,y^\b, u^\b, w) >0,
\end{equation}
for all $(t, w, y^\b) \in [0,T] \times \R^2.$
Then, for all $(t, y^\b) \in [0,T] \times \R$, the system
\begin{align}
&H(t,y^\b, u^\b, w) =0    \label{deriv}\\
&\widetilde H(t,y^\b, u^\b, w) =0, \label{derivgen}
 \end{align}
has a unique solution $(\bar u^{\b}(t, y^\b), \bar w(t, y^\b)) \in \R^2$.
\end{lemma}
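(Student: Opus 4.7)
The plan is to interpret the system as the critical-point equation of the reduced objective $\Psi_2$. A short computation shows
\[
\frac{\partial \Psi_2}{\partial u^\b}(t,y^\b,u^\b,w) = \gamma B(t,T)\, H(t,y^\b,u^\b,w), \qquad \frac{\partial \Psi_2}{\partial w}(t,y^\b,u^\b,w) = \gamma B(t,T)\, \widetilde H(t,y^\b,u^\b,w),
\]
so $(\bar u^\b, \bar w)$ solves the system exactly when it is a critical point of $\Psi_2(t,y^\b,\cdot,\cdot)$ on $\R^2$. Because the extension of $q^\b$ to $\R$ preserves the sign condition on $\partial^2 q^\b/\partial (u^\b)^2$, the computation in Proposition \ref{convex} carries through on $\R^2$ and $\Psi_2$ is strictly convex there. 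Consequently any critical point is automatically the unique global minimizer, and uniqueness in the lemma will follow for free.

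For existence I would first eliminate $u^\b$. For each fixed $w$, the map $u^\b \mapsto H(t,y^\b,u^\b,w)$ is strictly increasing: its derivative is $\partial^2 \Psi_2/\partial (u^\b)^2$ up to the positive factor $\gamma B(t,T)$, and this is strictly positive by \eqref{eq:der3}. Combined with the limit hypothesis \eqref{H1} and the intermediate value theorem, this produces, for every $w \in \R$, a unique $U(t,y^\b,w) \in \R$ with $H(t,y^\b,U,w) = 0$; the implicit function theorem provides $\mathcal C^1$ dependence of $U$ on $w$.

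I would then set $\phi(w) := \Psi_2(t,y^\b,U(t,y^\b,w),w)$. By the envelope identity combined with $\partial \Psi_2/\partial u^\b|_{u^\b = U} = 0$ one gets $\phi'(w) = \gamma B(t,T)\, \widetilde H(t,y^\b,U(t,y^\b,w),w)$, so the remaining equation $\widetilde H = 0$ is equivalent to $\phi'(\bar w) = 0$. The function $\phi$ inherits strict convexity from $\Psi_2$, and the bounds $q^\b \geq 0$ and $e^{\gamma B(t,T)(zu^\b + w K(t,z))} \geq 0$ give the crude estimate
\[
\phi(w) \geq -\gamma B(t,T) c^\b(t,y^\b) - \lambda^\b(t,y^\b) - \gamma B(t,T)|\mu(t) - r(t)|\, |w| + \tfrac{1}{2}(\gamma B(t,T))^2 \sigma^2(t) w^2,
\]
which tends to $+\infty$ as $|w| \to \infty$ since $\sigma(t) > 0$. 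A strictly convex, coercive, continuous function on $\R$ attains its minimum at a unique $\bar w$, and $\bar u^\b := U(t,y^\b,\bar w)$ then completes the construction. The main obstacle is this coercivity step, which rests entirely on the quadratic term $\tfrac{1}{2}(\gamma B(t,T))^2 \sigma^2(t) w^2$ in $\Psi_2$; once that is secured, existence follows from standard convex analysis and uniqueness is immediate from strict convexity on $\R^2$. A minor care-point is checking that the extension of $q^\b$ outside $[0,1]$ can be taken nonnegative and convex in $u^\b$, so that both the lower bound above and the proof of Proposition \ref{convex} remain valid on all of $\R$.
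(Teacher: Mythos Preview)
Your argument is correct and takes a genuinely different route from the paper's. The paper proceeds by direct analysis of the two equations: first solving $H=0$ for $\tilde u^\b(t,y^\b,w)$ (your elimination step is identical), and separately solving $\widetilde H=0$ for $\tilde w(t,y^\b,u^\b)$; it then substitutes $\tilde u^\b$ into $\widetilde H$ and shows by a hands-on limit computation that $w\mapsto \widetilde H(t,y^\b,\tilde u^\b(t,y^\b,w),w)$ ranges from $-\infty$ to $+\infty$, the delicate point being to bound the integral term as $w\to -\infty$ via a contradiction argument based on the sign of $\partial q^\b/\partial u^\b$. Your approach is variational: you recognize the system as $\nabla\Psi_2=0$, invoke strict convexity on $\R^2$ (extending Proposition \ref{convex}) to settle uniqueness, and obtain existence from coercivity of $\phi(w)=\min_{u}\Psi_2(\cdot,u,w)$, which you read off from the quadratic $\tfrac12(\gamma B(t,T))^2\sigma^2(t)w^2$ after discarding the $u^\b$-dependent terms using $q^\b\ge 0$ and $e^X-1\ge -1$. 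This is cleaner and more robust: it bypasses the paper's somewhat ad hoc limit analysis in Step~3, and it makes transparent why $\sigma(t)>0$ is the structural ingredient driving existence. The paper's approach, on the other hand, yields along the way the monotonicity of $\tilde w(t,y^\b,u^\b)$ in $u^\b$ (displayed in \eqref{monotonia}), which is used later in Propositions \ref{existence} and \ref{stime}; your proof does not produce this byproduct, so if you follow your route you would need to supply that monotonicity separately when you reach those results. Your care-point about the extension of $q^\b$ is well placed but is exactly what the paper assumes just before the lemma.
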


The proof of Lemma \ref{lemma:minimizer} is given in Appendix \ref{app:proofs}.
Here, we stress one of the properties of the solution which will be also used later. The solution $\widetilde w(t,y^\b, u^\b)$ of equation \eqref{derivgen} is monotonic decreasing in $u^\b\in \R$, for every  $(t, y^\b) \in  [0,T] \times \R$ (see Step 2 in the proof of Lemma \ref{lemma:minimizer}).  This implies that,
for every $u^\b \in (0,1)$, $\widetilde w (t, y^\b, u^\b)$ is bounded from above and below by the solutions of equation \eqref{derivgen} with $u^\b=0,1$. Explicitly,
\begin{equation}\label{monotonia}
\widetilde w(t, y^\b, 1) <  \widetilde w(t, y^\b, u^\b) <  \widetilde w(t, y^\b, 0),
\end{equation}
for all $u^\b\in (0,1)$. For the sake of clarity, in Figure \ref{f1} we plot $\widetilde H(t,y^\b,u^\b,w)$ as a function of $w$ for fixed $(t,y^\b)$ corresponding to three different values of $u^\b$, namely,  $u^\b=0,1,\frac{1}{2}$. In this example we choose $t=0$, $y^\b=-0.2$, $\lambda^\b(0,-0.2)=10e^{0.2}$, $\gamma=0.5$, $r(0)=0.02$, $\mu(0)=0.05$, $\sigma(0)=0.1$, and the jump size distribution $F^\b(z)$ is assumed to be exponential with expectation equal to $1$. The function $\widetilde H(t,y^\b,u^\b,w)$ intercepts the horizontal axis at values $\widetilde w(t, y^\b, 1),\widetilde w(t, y^\b, 0.5),\widetilde w(t, y^\b, 0)$, which satisfy the chain of inequalities \eqref{monotonia}.
\begin{figure}[h]
  \centering
  \includegraphics[width=.6\textwidth]{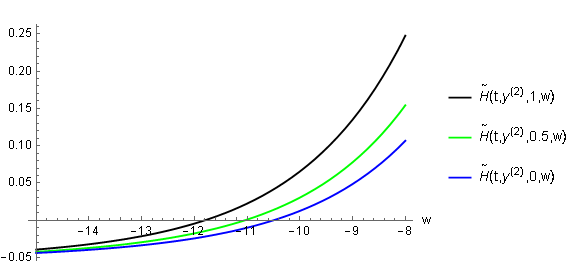}
  \caption{The function $\widetilde H(t,y^\b,u^\b,w)$ as a function of $w$ for fixed $(t,y^\b)$ and values of $u^\b=0,1,0.5$. The function $\widetilde H(t,y^\b,u^\b,w)$ intercepts the horizontal axis at values $\widetilde w(t, y^\b, 1),\widetilde w(t, y^\b, 0.5),\widetilde w(t, y^\b, 0)$ which satisfy the chain of inequalities \eqref{monotonia}.}\label{f1}
\end{figure}

\begin{proposition}\label{existence}
Under the same hypotheses of Lemma \ref{lemma:minimizer}, there is a unique minimizer $(u^{\b^*}(t,y^\b), w^*(t,y^\b))$ of problem \eqref{min} given by
\begin{equation} \label{reinsgen}
u^{\b^*} (t,y^\b) = \max \left \{ \min \left \{ \bar u^{\b}(t,y^\b), 1 \right \},0 \right \},
\end{equation}
for all $(t,y^\b)  \in  [0,T] \times \R,$ and
 \begin{equation} \label{reins2}
w^*(t,y^\b)= \begin{cases}
\widetilde w(t,y^\b,0)  & (t,y^\b) \in \widetilde A_0 \\
\widetilde w(t,y^\b,1) & (t,y^\b) \in \widetilde A_1 \\
\bar w(t,y^\b) & (t,y^\b) \in (\widetilde A_0\cup \widetilde A_1)^c,
\end{cases}
\end{equation}
where \begin{itemize}
\item[(i)] $(\bar u^{\b}(t, y^\b), \bar w(t, y^\b))$ is the unique solution of the system of equations \eqref{deriv}--\eqref{derivgen};
\item[(ii)]
 $\widetilde w(t,y^\b,0)$,  and $\widetilde w(t,y^\b,1)$ are unique solutions of equation \eqref{derivgen} with $u^\b=0,1$, respectively (see Step 2 in the proof of Lemma \ref{lemma:minimizer});
\item[(iii)] $(\widetilde A_0\cup \widetilde A_1)^c$ is the complementary of the set $\widetilde A_0\cup \widetilde A_1$ and the sets $\widetilde A_0$ and $\widetilde A_1$ are given by
    \begin{align}\widetilde A_0:=\{ (t,y^\b)  \in  [0,T] \times \R:\ \bar u^\b(t,y^\b) \leq 0 \},\\
    \widetilde A_1:=\{ (t,y^\b)  \in  [0,T] \times \R:\ \bar u^\b(t,y^\b) \geq 1 \}.
    \end{align}
    \end{itemize}
\end{proposition}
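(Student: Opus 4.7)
The plan is to exploit the strict convexity of $\Psi_2$ guaranteed by Proposition~\ref{convex} together with a profile-function argument that reduces the joint $(u^\b,w)$-minimization over $[0,1]\times\R$ to a one-dimensional analysis of $u^\b$. First I would verify that the first-order conditions for the unconstrained minimization of $\Psi_2$ on $\R^2$ read exactly
\[
\frac{\partial \Psi_2}{\partial u^\b}(t,y^\b,u^\b,w) = \gamma B(t,T)\,H(t,y^\b,u^\b,w)=0, \qquad \frac{\partial \Psi_2}{\partial w}(t,y^\b,u^\b,w) = \gamma B(t,T)\,\widetilde H(t,y^\b,u^\b,w)=0.
\]
Combined with Lemma~\ref{lemma:minimizer}, this identifies $(\bar u^\b(t,y^\b),\bar w(t,y^\b))$ as the unique global minimizer of $\Psi_2$ on $\R^2$, where the extended $q^\b$ guarantees that strict convexity of $\Psi_2$ persists outside $[0,1]\times\R$.

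Next I would introduce the profile function
\[
h(u^\b) := \min_{w\in\R}\Psi_2(t,y^\b,u^\b,w) = \Psi_2\bigl(t,y^\b,u^\b,\widetilde w(t,y^\b,u^\b)\bigr),
\]
where $\widetilde w(t,y^\b,u^\b)$ is the unique solution of $\widetilde H(t,y^\b,u^\b,\cdot)=0$ identified in Step~2 of the proof of Lemma~\ref{lemma:minimizer}. Standard marginalization shows that $h$ inherits strict convexity from $\Psi_2$, and, since $\frac{\partial \Psi_2}{\partial w}$ vanishes along the curve $w=\widetilde w(t,y^\b,u^\b)$, the envelope theorem yields
\[
h'(u^\b) = \frac{\partial \Psi_2}{\partial u^\b}\bigl(t,y^\b,u^\b,\widetilde w(t,y^\b,u^\b)\bigr) = \gamma B(t,T)\,H\bigl(t,y^\b,u^\b,\widetilde w(t,y^\b,u^\b)\bigr).
\]
Hence $h'$ is strictly increasing on $\R$ and vanishes exactly at $u^\b=\bar u^\b(t,y^\b)$.

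Finally I would split into three cases according to the location of $\bar u^\b(t,y^\b)$ relative to $[0,1]$. When $\bar u^\b(t,y^\b)\in(0,1)$, that is $(t,y^\b)\in(\widetilde A_0\cup\widetilde A_1)^c$, the unconstrained minimizer lies in the feasible set and the joint minimum over $[0,1]\times\R$ coincides with it: $(u^{\b^*},w^*)=(\bar u^\b,\bar w)$. When $(t,y^\b)\in\widetilde A_0$, i.e.\ $\bar u^\b\leq 0$, we have $h'(u^\b)\geq 0$ for all $u^\b\in[0,1]$, so $h$ is nondecreasing on $[0,1]$; the constrained minimum is thus attained at the left boundary, giving $u^{\b^*}=0$ and $w^*=\widetilde w(t,y^\b,0)$. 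The case $(t,y^\b)\in\widetilde A_1$ is symmetric and yields $u^{\b^*}=1$ with $w^*=\widetilde w(t,y^\b,1)$. The three cases combine into the stated formulas, in particular $u^{\b^*}(t,y^\b)=\max\{\min\{\bar u^\b(t,y^\b),1\},0\}$. The main technical point is the envelope computation of $h'$ and the inheritance of strict convexity by $h$; both are routine once one knows from Step~2 of Lemma~\ref{lemma:minimizer} that $\widetilde w(t,y^\b,\cdot)$ is a smooth, strictly decreasing function of $u^\b$, so the implicit function theorem applies.
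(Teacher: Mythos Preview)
Your proof is correct and follows essentially the same route as the paper: both identify the first-order conditions with $H=0$ and $\widetilde H=0$, invoke Proposition~\ref{convex} and Lemma~\ref{lemma:minimizer}, and then split into three cases according to whether $\bar u^{\b}(t,y^\b)$ lies in $(0,1)$, is $\leq 0$, or is $\geq 1$. Your packaging via the profile function $h(u^\b)=\min_{w}\Psi_2$ and the envelope theorem is a slightly cleaner presentation than the paper's, which in the boundary cases argues directly that $\frac{\partial\Psi_2}{\partial u^\b}(t,y^\b,0,\widetilde w(t,y^\b,0))\geq 0$ (respectively $\frac{\partial\Psi_2}{\partial u^\b}(t,y^\b,1,\widetilde w(t,y^\b,1))\leq 0$) by combining the monotonicity of $\widetilde w(t,y^\b,\cdot)$ with the monotonicity of $\frac{\partial\Psi_2}{\partial u^\b}$ in both variables; but the underlying mechanics are identical.
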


\begin{proof}
Let us observe that the first order conditions in problem \eqref{min} read as the system \eqref{deriv}--\eqref{derivgen}.
Then, by Proposition \ref{convex} and by Lemma \ref{lemma:minimizer} , the minimizer of the problem \eqref{min} on the set
$$\left\{ (t,y^\b)  \in  [0,T] \times \R:  0<\bar u^{\b}(t, y^\b) < 1 \right\}$$ is given by $(u^{\b^{* }} (t,y^\b) = \bar u^\b(t, y^\b),  w^*(t,y^\b) = \bar w(t, y^\b))$.



The following three cases arise.
\begin{itemize}
\item[(i)]If $(t,y^\b) \in \widetilde A_0$, using the fact that $\widetilde w(t, y^\b, u^\b)$ is monotonic in $u^\b$ (see the inequalities  in \eqref{monotonia}) and that $\frac{\partial  \Psi_2}{\partial u^\b}(t,y^\b,u^\b,w)$ is increasing in $u^\b \in [0,1]$ and in $w \in \R$, it holds that
$$\frac{\partial  \Psi_2}{\partial u^\b}(t, y^\b, 0, \widetilde w(t,y^\b,0))  \geq \frac{\partial  \Psi_2}{\partial u^\b}  (t, y^\b, \bar u^\b(t, y^\b) , \bar w(t, y^\b))=0,$$
because $0 > \bar u^\b(t, y^\b)$ and $\widetilde w(t,y^\b,0) > \bar w(t, y^\b)$. This implies that
$\Psi_2(t, y^\b, u^\b, \widetilde w(t,y^\b,0))$  is increasing in $u^\b \in [0,1]$ and hence, for $(t,y^\b) \in \widetilde A_0$, the minimizer of problem \eqref{min} is given by $(u^{{\b}^{* }} (t,y^\b) = 0, w^*(t,y^\b) = \widetilde w(t,y^\b,0))$.
\item[(ii)]
If $(t,y^\b) \in \widetilde A_1$, since $\widetilde w(t, y^\b, u^\b)$ is monotonic in $u^\b$ (see \eqref{monotonia}) and  $\frac{\partial  \Psi_2}{\partial u^\b}(t,y^\b,u^\b,w)$ is increasing in $u^\b \in [0,1]$ and in $w \in \R$, we also get that
$$\frac{\partial  \Psi_2}{\partial u^\b}(t, y^\b, 1, \widetilde w(t,y^\b,1))  \leq \frac{\partial  \Psi_2}{\partial u^\b}  (t, y^\b, \bar u^{\b}(t, y^\b) , \bar w(t, y^\b))=0.$$
Then,
$\Psi_2(t, y^\b, u^\b, \widetilde w(t,y^\b,1))$  is decreasing in $u^\b \in [0,1]$, and the minimizer of problem \eqref{min} for $(t,y^\b) \in \widetilde A_1$, is given by $(u^{{\b}^{* }} (t,y^\b) = 1, w^*(t,y^\b) = \widetilde w(t,y^\b,1))$.
\item[(iii)]
Finally, if $(t,y^\b) \in (A_0\cup A_1)^c$, the minimizer of problem \eqref{min} is given by $(u^{{\b}^{*}} (t,y^\b) = \bar u^{\b}(t, y^\b),  w^*(t,y^\b) = \bar w(t, y^\b))$.
\end{itemize}
\end{proof}

 \begin{remark}
Note that the assumptions of Lemma \ref{lemma:minimizer} (and hence of Proposition \ref{existence}) are fulfilled under the classical premium principles. Indeed, under
the expected value principle we have
 \begin{equation}
H(t,w, y^\b, u^\b)=   \lambda^\b(t,y^\b) \left(\int_0^{+\infty} z  e^{\gamma B(t,T)( zu^\b  + w K(t,z)) } F^\b(\ud z) - (1+\theta_R^\b) \mathbb{E} [Z^\b]\right) \end{equation}
and under the variance principle
 \begin{equation}
 \begin{split}
& H(t,w, y^\b, u^\b)\\
& \qquad =   \lambda^\b(t,y^\b)\left( \int_0^{+\infty} z  e^{\gamma B(t,T)( zu^\b  + w K(t,z)) } F^\b(\ud z) -  \mathbb{E} [Z^\b] - 2 \theta_R^\b(1- u^\b)  \mathbb{E} [(Z^\b)^2]\right).
\end{split}
\end{equation}
\end{remark}


\begin{proposition}[Properties of $w^{*}$]\label{stime}
Suppose that the hypotheses of Lemma \ref{lemma:minimizer} hold.  Then,
\begin{equation}\label{dis_1}
w^*(t,y^\b) \leq \frac{\mu(t) - r(t)}{\gamma B(t,T) \sigma^2(t)},
\end{equation}
for all $(t,y^\b)  \in  [0,T] \times \R$. 
In particular, if $\mathbb{E}[K(t,Z^\b)]=\int_0^{+\infty} K(t,z)F^\b(\ud z) >0$,  for all $t \in [0,T]$, inequality \eqref{dis_1} is strict. Moreover, 
\begin{equation}\label{dis_2}
w^*(t,y^\b) \geq \min \left\{0, \frac{\mu(t) - r(t)}{\gamma B(t,T) \sigma^2(t)} - \frac{  \delta(t) \int_0^{+\infty} e^{\gamma B(t,T) z } F^\b(\ud z)}{\gamma B(t,T) \sigma^2(t)} \right\},
\end{equation}
for all $(t,y^\b)   \in  [0,T] \times \R$ and where $\delta(t)$ is introduced in Assumption \ref{ASS}.
Finally, if we set
\begin{align} \label{c1}
C_1 &:= \left\{(t, y^\b) \in  [0,T]\times \R:\   \mu(t) - r(t)  <  \lambda^\b(t, y^\b) \int_0^{+\infty}  K(t,z) F^\b(\ud z)\right\}, \\
\label{c2} C_2 &:= \left\{ (t, y^\b) \in  [0,T]\times \R:\  \mu(t) - r(t)  > \lambda^\b(t, y^\b) \int_0^{+\infty} K(t,z) e^{\gamma B(t,T) z } F^\b(\ud z)\right\}, \qquad{}  \end{align} then we get that $w^*(t,y^\b) < 0$, for all $(t,y^\b)   \in C_1$ and $w^*(t,y^\b) > 0$, for all $(t,y^\b)   \in C_2$.
\end{proposition}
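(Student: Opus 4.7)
The plan rests on one key observation: by inspection of Proposition \ref{existence}, in every one of the three branches defining $w^*(t,y^\b)$, the pair $(u^{\b^*}(t,y^\b),w^*(t,y^\b))\in[0,1]\times\R$ satisfies the first-order condition in the $w$-variable, namely $\widetilde H(t,y^\b,u^{\b^*}(t,y^\b),w^*(t,y^\b))=0$. Solving this identity for $w^*$ via \eqref{eq:Htilde} yields the representation
\begin{equation}\label{eq:stime_rep}
w^*(t,y^\b)=\frac{\mu(t)-r(t)}{\gamma B(t,T)\sigma^2(t)}-\frac{\lambda^\b(t,y^\b)\int_0^{+\infty}K(t,z)\,e^{\gamma B(t,T)(zu^{\b^*}(t,y^\b)+w^*(t,y^\b)K(t,z))}F^\b(\ud z)}{\gamma B(t,T)\sigma^2(t)},
\end{equation}
and every claim in the proposition follows from \eqref{eq:stime_rep} via elementary sign arguments on the exponent.

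The upper bound \eqref{dis_1} is immediate from \eqref{eq:stime_rep}: since $K\geq 0$, $\lambda^\b>0$ and the exponential is positive, the subtracted term is non-negative. If in addition $\mathbb{E}[K(t,Z^\b)]>0$, then $K(t,\cdot)$ is not $F^\b$-almost-surely zero, so the integrand is strictly positive on a set of positive $F^\b$-measure, the subtracted term is strictly positive, and the inequality becomes strict.

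For the lower bound \eqref{dis_2}, I would split into cases. If $w^*(t,y^\b)\geq 0$ the announced $\min\{0,\cdot\}$-bound is trivial. If instead $w^*(t,y^\b)<0$, then $w^*K(t,z)\leq 0$ (because $K\geq 0$) and $zu^{\b^*}\leq z$ (because $u^{\b^*}\in[0,1]$ and $z\geq 0$), so $e^{\gamma B(t,T)(zu^{\b^*}+w^*K(t,z))}\leq e^{\gamma B(t,T)z}$. Combined with $K(t,z)\leq 1$ and Assumption \ref{ASS}(i), which gives $\lambda^\b(t,y^\b)\leq\delta(t)$, the subtracted term in \eqref{eq:stime_rep} is bounded above by $\delta(t)\int_0^{+\infty}e^{\gamma B(t,T)z}F^\b(\ud z)/(\gamma B(t,T)\sigma^2(t))$, and \eqref{dis_2} is read off.

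The sign assertions on $C_1$ and $C_2$ are also deduced from \eqref{eq:stime_rep} by contradiction. On $C_1$, suppose $w^*(t,y^\b)\geq 0$; then $w^*K(t,z)\geq 0$, the exponential in \eqref{eq:stime_rep} is $\geq 1$, so the integral dominates $\int_0^{+\infty}K(t,z)F^\b(\ud z)$, and the defining inequality of $C_1$ then forces $w^*<0$, a contradiction. On $C_2$, suppose $w^*(t,y^\b)\leq 0$; then $w^*K(t,z)\leq 0$ together with $zu^{\b^*}\leq z$ give the pointwise bound $e^{\gamma B(t,T)(zu^{\b^*}+w^*K(t,z))}\leq e^{\gamma B(t,T)z}$, so the subtracted term is at most $\lambda^\b(t,y^\b)\int_0^{+\infty}K(t,z)e^{\gamma B(t,T)z}F^\b(\ud z)/(\gamma B(t,T)\sigma^2(t))$, which by definition of $C_2$ is strictly less than $(\mu(t)-r(t))/(\gamma B(t,T)\sigma^2(t))$, forcing $w^*>0$, again a contradiction. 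I do not foresee any serious obstacle: the whole proof is an algebraic manipulation once the common identity $\widetilde H(t,y^\b,u^{\b^*},w^*)=0$ across the three branches of \eqref{reins2} has been recognised.
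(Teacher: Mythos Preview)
Your argument is correct. The identity $\widetilde H(t,y^\b,u^{\b^*},w^*)=0$ does hold in each of the three branches of \eqref{reins2}, and the representation \eqref{eq:stime_rep} together with the elementary sign and size estimates you list delivers all four assertions.

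The paper's proof is organised a bit differently. Rather than working directly with the optimal pair, it first invokes the monotonicity of $\widetilde w(t,y^\b,u^\b)$ in $u^\b$ (the chain \eqref{monotonia}) to sandwich $w^*$ between $\widetilde w(t,y^\b,1)$ and $\widetilde w(t,y^\b,0)$, and then bounds these two endpoint functions separately by evaluating $\widetilde H$ at $u^\b=0$ and $u^\b=1$. For the sign statements on $C_1$ and $C_2$ it checks the sign of $\widetilde H$ at $w=0$ for the relevant endpoint value of $u^\b$ and again uses monotonicity. Your route bypasses the monotonicity sandwich entirely: by recognising that the first-order condition in $w$ is always active at the optimum, you get a single formula \eqref{eq:stime_rep} valid across all regimes and read everything off from it. This is a mild but genuine streamlining; the paper's approach, on the other hand, makes the role of the extremal retention levels $u^\b=0,1$ more visible, which ties in naturally with the case distinction in Proposition~\ref{existence}.
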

The proof of Proposition \ref{stime} can be found in Appendix \ref{app:proofs}. Notice that the case $\mathbb{E}[K(t, Z^\b)]>0$, for all $t \in [0,T]$, corresponds to the case where common shock is not negligible.

\subsection{Expected value principle}\label{EVP}

Now, we discuss the case where the insurance and reinsurance premia  $c^\b(t, y^\b), q^\b(t, y^\b, u^\b)$  are computed according to the expected value principle, that is,
\begin{align}
 c^\b(t, y^\b)  & =  (1+\theta^\b)\mathbb{E}[Z^\b]\lambda^\b(t,y^\b), \\
 q^\b(t, y^\b, u^\b)  & = (1+\theta_R^\b)\mathbb{E} [Z^\b] (1-u^\b) \lambda^\b(t,y^\b).
\end{align}
We derive explicit formulas for the optimal strategy under the following additional assumptions on the distribution of the claim size and the jumps of the price process.
\begin{assumption}\label{moltiplicative}
The random variable $Z^\b$ takes values in a compact set and
$K(t,z) = k (t)  z$, with $0<k (t) < \frac{1}{D}$, where $D:= \inf\{ z \in (0, +\infty): \  F^\b(z) = 1\} <  \infty$.
\end{assumption}

Equation \eqref{psi2}, in this case,  reads as
\begin{align}\label{psi3}
 \Psi_2(t,y^\b,u^\b,w)  & = \gamma B(t,T) \mathbb{E}[Z^\b]\lambda^\b(t,y^\b) \left[ \theta_R^\b -  \theta^\b - ( 1 + \theta_R^\b) u^\b\right] \\
&  + \gamma B(t,T)
 \biggl [  \frac{1}{2} \gamma B(t,T) \sigma^2(t) w^2 - (\mu(t) - r(t) ) w\biggr] \\
&  +
\lambda^\b(t,y^\b) \int_0^{+\infty} \left( e^{\gamma B(t,T) z (u^\b + w k(t))}  - 1  \right) F^\b(\ud z).
\end{align}

Now, we consider the first order conditions
\begin{align}
H(t, u^\b, w)=0, \label{derivata1}\\
\widetilde H (t, y^\b, u^\b, w)=0, \label{derivata2}
\end{align}
with
 \begin{align}
H(t,  u^\b, w)=&- \lambda^\b(t,y^\b)(1+\theta_R^\b) \mathbb{E} [Z^\b] +  \lambda^\b(t,y^\b)\int_0^{+\infty} z  e^{\gamma B(t,T) z (u^\b  + w k(t)) } F^\b(\ud z), \\
\widetilde  H (t, y^\b, u^\b, w)=&\gamma B(t,T) \sigma^2(t) w - (\mu(t) - r(t) )\\
&+ k(t) \lambda^\b(t, y^\b)  \int_0^{+\infty}  z e^{\gamma B(t,T) z (u^\b + w k(t)) } F^\b(\ud z).
 \end{align}
We set  $\phi (t)  = u^\b  + w k(t)$. Then,  \eqref{derivata1} is equivalent to
 \begin{equation} \label{der1}(1+\theta_R^\b) \mathbb{E} [Z^\b] = \int_0^{+\infty} z  e^{\gamma B(t,T) z \phi} F^\b(\ud z).
 \end{equation}
Equation \eqref{der1} has a unique deterministic solution $\phi^*(t)>0$. Indeed, define the function
 $$
 h(t,\phi)= \int_0^{+\infty} z  e^{\gamma B(t,T) z \phi } F^\b(\ud z);
 $$
for all $t \in [0,T]$, we get $h(t,0) = \mathbb{E} [Z^\b] < (1+\theta_R^\b) \mathbb{E} [Z^\b]$, and also,  $h(t,\phi)$ is continuous and increasing in $\phi$ and satisfies $\lim_{\phi \to  + \infty} h(t,\phi) = + \infty$. Figure \ref{f2} provides a representation of the function $\phi \mapsto h(t,\phi)$ at $t=0$ under the parameters $r(0)=0.02, \gamma=\frac{1}{2}$ and $\theta_R^\b=0.3$. Here, the jump size distribution $F^\b(z)$ is assumed to be truncated exponential with the density $f^\b(z)=c e^{-z} \I_{[0,100]}(z)$ and $c=\frac{1}{1-e^{-100}}$. We see that the function intercepts the level $(1+\theta_R^\b) \mathbb{E} [Z^\b]$ at a unique point.
\begin{figure}[h]
  \centering
  \includegraphics[width=.6\textwidth]{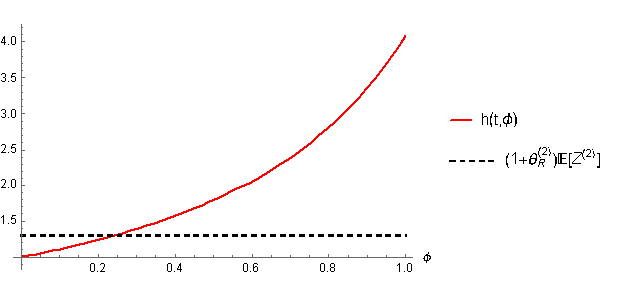}
  \caption{The function $h(t,\phi)$ as a function of $\phi$ at $t=0$. The function $\widetilde H(t,y^\b,u^\b,w)$ intercepts the level $(1+\theta_R^\b) \mathbb{E} [Z^\b]$ at a unique point $\phi^*$.}\label{f2}
\end{figure}

Now let
\begin{equation} \label{u bar}
\bar u^\b(t,y^\b) = \phi^*(t)  - \bar w(t,y^\b) k(t),
\end{equation}
where $\phi^*(t)>0$ is the unique  solution of the equation \eqref{der1}
 and
\begin{equation} \label{inv}  \bar w(t,y^\b) = \frac{\mu(t) - r(t)}{\gamma B(t,T) \sigma^2(t)} - \frac{\lambda^\b(t, y^\b) k(t) (1+\theta_R^\b) \mathbb{E} [Z^\b]} {\gamma B(t,T) \sigma^2(t)},
\end{equation}
for all $(t,y^\b)  \in  [0,T] \times \R$.
Similarly to Proposition \ref{existence}, we define  the sets
\begin{align}
\widetilde A_0 & =\{ (t,y^\b)  \in  [0,T] \times \R:\ \bar u^\b(t,y^\b) \leq 0 \}, \\
\widetilde A_1 & =\{ (t,y^\b)  \in  [0,T] \times \R:\ \bar u^\b(t,y^\b) \geq 1 \},\\
(\widetilde A_0\cup \widetilde A_1)^c & = \{ (t,y^\b)  \in  [0,T] \times \R:\  0 <\bar u^\b(t,y^\b) <1 \},
\end{align}
and let the functions $\widetilde w(t,y^\b,0)$ and $\widetilde w(t,y^\b,0)$ be the unique solutions of equation \eqref{derivata2} with $u^\b=0,1$, respectively.

 \begin{corollary}\label{special}
Under the expected premium principle, problem \eqref{min} admits a unique minimizer $(u^{\b^*}(t,y^\b), w^*(t,y^\b))$
given by
\begin{equation} \label{reins}
u^{{\b}^{* }} (t,y^\b) = \max \left \{ \min \left \{ \bar u^\b(t,y^\b), 1 \right \},0 \right \},
\end{equation}
for all $(t,y^\b)  \in  [0,T] \times \R$, and
\begin{equation} 
w^*(t,y^\b)= \begin{cases}
\widetilde w(t,y^\b,0)  & (t,y^\b) \in \widetilde A_0, \\
\widetilde w(t,y^\b,1) & (t,y^\b) \in \widetilde A_1, \\
\bar w(t,y^\b) & (t,y^\b) \in (\widetilde A_0\cup \widetilde A_1)^c.
\end{cases}
\end{equation}
\end{corollary}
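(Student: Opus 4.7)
My plan is to view Corollary \ref{special} as a direct specialization of Proposition \ref{existence}, so the work reduces to (a) checking the hypotheses of Lemma \ref{lemma:minimizer} in the present setting, and (b) solving the system \eqref{deriv}--\eqref{derivgen} in closed form using the multiplicative structure of $K$ in Assumption \ref{moltiplicative}.

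\textbf{Step 1: verifying the hypotheses.} Under the expected value principle, $q^\b(t,y^\b,u^\b)=(1+\theta_R^\b)\mathbb{E}[Z^\b](1-u^\b)\lambda^\b(t,y^\b)$, so $\frac{\partial^2 q^\b}{\partial {u^\b}^2}\equiv 0\ge 0$, matching the convexity hypothesis of Lemma \ref{lemma:minimizer}. Since $Z^\b$ takes values in $[0,D]$ with $D<\infty$ (Assumption \ref{moltiplicative}), for fixed $(t,y^\b,w)$ the integrand in $H$ is dominated by $D\,e^{\gamma B(t,T)D(|u^\b|+|w|k(t))}$ and, as $u^\b\to -\infty$, pointwise $z\,e^{\gamma B(t,T)z(u^\b+wk(t))}\to 0$ for $z>0$. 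Dominated convergence then gives $H\to -\lambda^\b(t,y^\b)(1+\theta_R^\b)\mathbb{E}[Z^\b]<0$, while monotone convergence as $u^\b\to+\infty$ gives $H\to+\infty$. Hence \eqref{H1} holds and Proposition \ref{existence} applies.

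\textbf{Step 2: solving the system in closed form.} Introduce $\phi:=u^\b+w\,k(t)$. Since $K(t,z)=k(t)z$, the first order condition \eqref{derivata1} reads
\begin{equation}
(1+\theta_R^\b)\,\mathbb{E}[Z^\b]=\int_0^{+\infty} z\,e^{\gamma B(t,T)z\phi}\,F^\b(\ud z)=:h(t,\phi).
\end{equation}
The function $\phi\mapsto h(t,\phi)$ is continuous, strictly increasing (its derivative equals $\gamma B(t,T)\int_0^{+\infty}z^2 e^{\gamma B(t,T)z\phi}F^\b(\ud z)>0$), with $h(t,0)=\mathbb{E}[Z^\b]<(1+\theta_R^\b)\mathbb{E}[Z^\b]$ and $\lim_{\phi\to+\infty}h(t,\phi)=+\infty$ (see Figure \ref{f2}). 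Hence there exists a unique $\phi^*(t)>0$ solving the equation. Substituting the identity $\int_0^{+\infty} z\,e^{\gamma B(t,T)z\phi^*(t)}F^\b(\ud z)=(1+\theta_R^\b)\mathbb{E}[Z^\b]$ into \eqref{derivata2} kills the jump integral and yields the linear equation
\begin{equation}
\gamma B(t,T)\sigma^2(t)\,w-(\mu(t)-r(t))+k(t)\lambda^\b(t,y^\b)(1+\theta_R^\b)\mathbb{E}[Z^\b]=0,
\end{equation}
whose unique solution is exactly $\bar w(t,y^\b)$ in \eqref{inv}. Then $\bar u^\b(t,y^\b)=\phi^*(t)-\bar w(t,y^\b)k(t)$, matching \eqref{u bar}. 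By Lemma \ref{lemma:minimizer}, $(\bar u^\b(t,y^\b),\bar w(t,y^\b))$ is the unique solution of \eqref{deriv}--\eqref{derivgen}.

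\textbf{Step 3: assembling the minimizer.} With $(\bar u^\b,\bar w)$ identified, the three-case formula for $(u^{\b^*},w^*)$ in Corollary \ref{special} is simply the specialization of the formula \eqref{reinsgen}--\eqref{reins2} in Proposition \ref{existence}, where the boundary values $\widetilde w(t,y^\b,0)$ and $\widetilde w(t,y^\b,1)$ are the unique solutions of \eqref{derivata2} with $u^\b=0$ and $u^\b=1$, respectively (existence and uniqueness of these being already guaranteed in Step 2 of the proof of Lemma \ref{lemma:minimizer}). The only slightly delicate point is the decoupling trick via $\phi$, which works precisely because $K(t,z)$ factorizes as $k(t)z$; without this, the two first-order conditions remain genuinely coupled and one cannot reduce to a one-dimensional scalar equation.
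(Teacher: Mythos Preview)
Your proof is correct and follows essentially the same route as the paper: introduce $\phi=u^\b+wk(t)$ to reduce \eqref{derivata1} to the scalar equation \eqref{der1} for $\phi^*(t)$, substitute into \eqref{derivata2} to obtain the linear equation for $\bar w$ given in \eqref{inv}, and then invoke Proposition \ref{existence} for the three-case structure. One small fix in Step 1: your ``dominating'' bound $D\,e^{\gamma B(t,T)D(|u^\b|+|w|k(t))}$ depends on $u^\b$ and so is not admissible for dominated convergence; simply observe that once $u^\b+wk(t)<0$ the integrand is bounded by $z\le D$, which is integrable and yields the limit $H\to -\lambda^\b(t,y^\b)(1+\theta_R^\b)\mathbb{E}[Z^\b]<0$.
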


\begin{proof}
Using the definition \eqref{der1} of $\phi^*(t)$ and plugging it into equation \eqref{derivata2}
leads to
\begin{equation}\label{eq:der22}\gamma B(t,T) \sigma^2(t) w - (\mu(t) - r(t) )  = - \lambda^\b(t, y^\b) k(t) (1+\theta_R^\b) \mathbb{E} [Z^\b],\end{equation}
for all $(t,y^\b)  \in  [0,T] \times \R$. Equation \eqref{eq:der22} has a unique solution given by
\begin{equation} \label{inv}  \bar w (t,y^\b) = \frac{\mu(t) - r(t)}{\gamma B(t,T) \sigma^2(t)} - \frac{\lambda^\b(t, y^\b) k(t) (1+\theta_R^\b) \mathbb{E} [Z^\b]} {\gamma B(t,T) \sigma^2(t)},
\end{equation}
for all $(t,y^\b)  \in  [0,T] \times \R$. The rest of the proof follows by Proposition \ref{existence}.
\end{proof}

\section{The optimal reinsurance-investment strategy and its properties}\label{sec:strategy}

We now collect the results of the previous sections and derive the optimal reinsurance and investment strategy corresponding to the problem \eqref{pb:optimization}.
\begin{proposition}\label{ultima}
Under the hypotheses of Lemma \ref{lemma:minimizer}, assume existence and uniqueness of a classical solution to the HJB equation \eqref{eq:bellman1} with the final condition \eqref{eq:final_condition1}. Moreover, suppose that for all $t \in [0,T]$
\begin{align}\label{eq:sigma}\sigma(t) > \sigma>0, \end{align}
and that \begin{align}\label{eq:kappa}
\int_0^T \delta(t) e ^{\kappa \delta(t) } \ud t < \infty,
\end{align}
where $\kappa = \frac{ 2 }  { \sigma^2} \overline{B}  \mathbb{E} \left[ e^{\gamma  \overline{B}  Z^\b } \right] $, $\overline{B}$ is defined in Notation \ref{notation:B} and $\delta(t)$ is introduced in Assumption \ref{ASS}.
Then, the process  $\{(u^{\a^*}(t,Y^\a_t), u^{\b^*} (t, Y^\b_t), w^*(t,Y^\b_t)): t \in [0,T]\}$,
where  $u^{\a^*}(t, y^\a)$ is computed in Proposition \ref{MC} and the pair $(u^{\b^*}(t,y^\b), w^*(t,y^\b))$ is computed in Proposition \ref{existence},
provides the optimal strategy for the problem \eqref{pb:optimization}.
\end{proposition}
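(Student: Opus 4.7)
The plan is to invoke Verification Theorem \ref{Verification}. The hypothesis of existence and uniqueness of a classical solution $\psi$ to \eqref{eq:bellman1}--\eqref{eq:final_condition1} provides the required regularity, while Propositions \ref{MC} and \ref{existence} supply measurable pointwise minimizers $u^{\a^*}(t,y^\a)$ and $(u^{\b^*}(t,y^\b),w^*(t,y^\b))$ of $\Psi_1$ and $\Psi_2$, respectively. Composing these with the continuous factor processes $Y^\a,Y^\b$ yields a predictable process
\[
\alpha^*_t = \bigl(u^{\a^*}(t,Y^\a_t),\, u^{\b^*}(t,Y^\b_t),\, w^*(t,Y^\b_t)\bigr)
\]
taking values in $[0,1]\times[0,1]\times\R$; the only remaining task is to verify that $\alpha^*\in\mathcal A$.

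The central step is admissibility via Proposition \ref{ADM}, that is, exhibiting a deterministic dominating function $\eta(t)$ for $|w^*(t,Y^\b_t)|$ satisfying \eqref{w_ass}, \eqref{eq:delta1} and \eqref{eq:delta2} (the two reinsurance components automatically lie in $[0,1]$ and generate no further integrability issue). Starting from the two-sided bound on $w^*(t,y^\b)$ in Proposition \ref{stime} and combining it with Assumption \ref{ASS}(iii) (that $|\mu-r|/\sigma\le C$), the lower bound $\sigma(t)>\sigma$ from \eqref{eq:sigma}, the uniform bounds $1/\overline{B}\le B(t,T)\le \overline{B}$, and Assumption \ref{ASS}(ii) (which ensures $\mathbb{E}[e^{\gamma\overline{B} Z^\b}]<\infty$), I can pathwise dominate
\[
|w^*(t,y^\b)| \leq A_1 + A_2\,\delta(t) =: \eta(t),
\]
for explicit positive constants $A_1,A_2$ depending only on $\gamma,\sigma,\overline{B},C$ and $\mathbb{E}[e^{\gamma\overline{B} Z^\b}]$.

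The last step checks \eqref{eq:delta1}--\eqref{eq:delta2} for this $\eta$. Integrability of $\delta$ (Assumption \ref{ASS}(i)), the standing integrability of $\mu$, $r$, $\sigma^2$, and condition \eqref{eq:kappa} (which, via the elementary inequality $e^{\kappa\delta}\ge \kappa\delta$, also yields $\delta\in L^2(0,T)$) together imply \eqref{eq:delta1}. After substitution, condition \eqref{eq:delta2} reduces to $\int_0^T \delta(t)\,e^{2\gamma\overline{B}(A_1+A_2\delta(t))}\,\ud t<\infty$, and the constant $A_2$ is arranged so that $2\gamma\overline{B} A_2=\kappa$; hence \eqref{eq:kappa} is exactly what is required. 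Proposition \ref{ADM} then gives $\alpha^*\in\mathcal A$, and Theorem \ref{Verification} delivers optimality. The main obstacle is the careful bookkeeping identifying the exponential growth rate coming from the $w^*$-bound with the constant $\kappa$ in \eqref{eq:kappa}; once this identification is made, the remaining admissibility estimates are routine.
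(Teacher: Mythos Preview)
Your proposal is correct and follows essentially the same route as the paper: invoke Theorem~\ref{Verification} once a classical solution $\psi$ is in hand, take the minimizers from Propositions~\ref{MC} and~\ref{existence}, and verify admissibility via Proposition~\ref{ADM} by producing a deterministic dominator $\eta(t)$ for $|w^*(t,Y^\b_t)|$ out of the two-sided estimate in Proposition~\ref{stime}.

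The only cosmetic difference is the shape of $\eta$. The paper takes
\[
\eta(t)=\max\Bigl\{\tfrac{|\mu(t)-r(t)|}{\gamma B(t,T)\sigma^2(t)},\ \tfrac{\delta(t)}{\gamma B(t,T)\sigma^2(t)}\!\int_0^{+\infty}\!e^{\gamma B(t,T)z}F^\b(\ud z)\Bigr\},
\]
keeping the factor $1/\sigma^2(t)$ explicit, whereas you first absorb it into constants and write $\eta(t)=A_1+A_2\delta(t)$. Both lead to the same exponential-rate identification $2\gamma\overline{B}\,\eta(t)\lesssim \kappa\,\delta(t)$ that matches \eqref{eq:kappa} and yields \eqref{eq:delta2}. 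One small caution: with your additive $\eta$, the term $\eta^2(t)\sigma^2(t)$ in \eqref{eq:delta1} contains $\delta^2(t)\sigma^2(t)$, which is not automatically integrable from $\sigma\in L^2$ and $\delta\in L^2$ alone; the paper's choice sidesteps this because the $1/\sigma^2(t)$ inside $\eta$ cancels against $\sigma^2(t)$. You can either keep the $1/\sigma^2(t)$ in your bound (as the paper does) or note that $\sigma^2\in L^1$ together with $\delta^2 e^{\kappa\delta}$ bounded via \eqref{eq:kappa} suffices after a crude splitting. Otherwise your argument is the paper's argument.
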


\begin{proof}
Let $$\eta(t) = \max \left\{ \frac{|\mu(t) - r(t)|}{\gamma B(t,T) \sigma^2(t)},  \frac{\delta(t)}{ \gamma B(t,T) \sigma^2(t)} \int_0^{+\infty} e^{\gamma B(t,T) z } F^\b(\ud z)  \right\}, \quad  t \in [0,T].$$
We show that conditions \eqref{w_ass}, \eqref{eq:delta1} and \eqref{eq:delta2} in Proposition \ref{ADM} are satisfied.

By Proposition \ref{stime}, we immediately have that $|w^*(t,Y^\b_t)| \leq \eta(t)$, for every $t \in [0,T]$.
Moreover, by \eqref{eq:sigma} and Assumption \ref{ASS}, we get
\begin{align}
&\int_0^T \Big(\eta(s)( |\mu(s) - r(s)| +\delta(s))+ {\eta}^2(s) \sigma^2(s) \Big) \ud s < \infty.
\end{align}

Finally, by \eqref{eq:kappa}
\begin{align}
 &\int_0^T  \delta(t) e^{ \frac{2 \gamma \overline{B}}{\gamma B(t,T)\sigma^2(t)} \delta(t) \int_0^{+\infty} e^{\gamma B(t,T) z } F^\b(\ud z)} \ud t  \leq \int_0^T  \delta(t) e^{ \frac{2 \overline{B}} {\sigma^2(t)} \delta(t) \mathbb{E}[ e^{\gamma  \overline{B}  Z^\b }]} \ud t =\int_0^T \delta(t) e ^{\kappa \delta(t)} dt <  \infty,
\end{align}
which implies \eqref{eq:delta2}.
Then, the triplet $\{(u^{\a^*}(t,Y^\a_t), u^{\b^* } (t, Y^\b_t), w^*(t,Y^\b_t)),\ t \in [0,T]\}$ is an admissible strategy and the statement follows by applying the Verification Theorem \ref{Verification} and Proposition \ref{existence}.
\end{proof}

\begin{remark}
If the arrival intensity of catastrophic claims is assumed to be bounded, i.e. for all $t \in [0,T] \times \mathbb R$, $\lambda^\b(t, y^\b) \leq \delta$, for a positive constant $\delta$, then condition  \eqref{eq:kappa} is trivially satisfied.
\end{remark}

\subsection{The effect of common shocks}\label{sec:effects}

To underline the characteristics of our model we compare optimal investment and reinsurance strategies with common shock dependence between  the insurance risk process and the risky asset, with those corresponding to a modelling framework where common shocks is not accounted for.

If there is no common shock, that is, when $K(t,z)=0$, the optimal investment and reinsurance strategy is given by  $\{(u^{{\a}^*}(t,Y^\a_t),  u^{{\b,no}^*} (t, Y^\b_t),  w^{{no}^*}(t)),\ t \in [0,T]\}$, where 
 $$ w^{{no}^*}(t) = \frac{\mu(t) - r(t)}{\gamma B(t,T) \sigma^2(t)}, \; u^{{\b,no}^*} (t,y^\b) = \max\{\min \{\bar u^{\b,no}(t,y^\b),1\}, 0\},$$
and $ \bar u^{\b,no}(t,y^\b)$ is the unique solution of the equation
\begin{equation}\label{nocomm}
 H^{no}(t,y^\b, u^\b) = \frac{\partial  q^\b}{\partial {u^\b}}(t,y^\b, u^\b) + \lambda^\b(t,y^\b) \int_0^{+\infty} z  e^{\gamma B(t,T)z u^\b} F^\b(\ud z) =0.
 \end{equation}
When there is no common shock, the optimal investment strategy turns out to be deterministic (i.e. function of time only) and the reinsurance strategies $\{(u^{{\a}^*}(t,Y^\a_t),  u^{{\b,no}^*} (t, Y^\b_t),\ t \in [0,T]\}$ depend in general on the stochastic factors $Y^\a$, $Y^\b$ that affect claim arrival intensities of each business line.
Under classical premium principles, the dependence of reinsurance strategies on the factors $Y^\a$, $Y^\b$ disappears, as seen in Example \ref{x}.

Below, we provide a comparison result between the common shock setting and the no-common shock setting.

\begin{proposition}\label{comparison}
Assume that $\mathbb{E}[K(t,Z^\b)] >0$, for all $t \in [0,T]$ (i.e. common shock is not negligible), the hypotheses of Proposition \ref{ultima} hold, and let $C_1$ and $C_2$ be defined in \eqref{c1} and \eqref{c2}, respectively. Then, the following inequalities hold
\begin{itemize}
\item[(i)] $w^*(t, Y^\b_t) < w^{*,no}(t)$,  for all $t \in [0,T]$;
\item[(ii)] $u^{{\b}^{*} } (t,Y_t^\b) >  u^{{\b,no}^*} (t,Y_t^\b)$ for all $(t,Y^\b) \in C_1$;
\item[(iii)]  $u^{\b^*} (t,Y_t^\b) < u^{{\b,no}^*} (t,Y_t^\b)$, for all $(t,Y^\b) \in C_2$.
\end{itemize}
Moreover, for all $\in [0,T]$, the values of $w^*(t, Y^\b_t)$ get smaller as the intensity $\lambda^\b(t,Y^\b_t)$ of claim arrival as well as with the jump size of the risky asset due to the shock (i.e. $K(t,Z^\b)$) get larger.
 \end{proposition}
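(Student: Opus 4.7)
The four assertions separate cleanly, and my plan is to handle them in order, reusing the first-order conditions derived in the previous section.

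For part (i), I would simply invoke Proposition \ref{stime}: the bound \eqref{dis_1} reads $w^*(t,y^\b)\le \frac{\mu(t)-r(t)}{\gamma B(t,T)\sigma^2(t)} = w^{*,no}(t)$, and Proposition \ref{stime} already records that the inequality is strict as soon as $\mathbb{E}[K(t,Z^\b)]>0$, which is precisely the standing hypothesis.

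For parts (ii) and (iii), the key device is to evaluate the common-shock first-order condition $H$ at $u^\b = \bar u^{\b,no}(t,y^\b)$ together with $w = w^*(t,y^\b)$, and to compare with $H^{no}(t,y^\b,\bar u^{\b,no})=0$. Because the derivative $\partial q^\b/\partial u^\b$ cancels, one obtains
\begin{align}
H(t,y^\b,\bar u^{\b,no},w^*) = \lambda^\b(t,y^\b)\int_0^{+\infty} z\, e^{\gamma B(t,T)z\bar u^{\b,no}}\bigl(e^{\gamma B(t,T)w^*K(t,z)}-1\bigr)F^\b(\ud z).
\end{align}
On $C_1$ Proposition \ref{stime} gives $w^*<0$, so under $\mathbb{E}[K(t,Z^\b)]>0$ the right-hand side is strictly negative; since $H(t,y^\b,\cdot,w^*)$ is strictly increasing in $u^\b$ (which I read off from $\partial H/\partial u^\b \ge \lambda^\b\gamma B(t,T)\int z^2 e^{\cdots}F^\b(\ud z)>0$, using the standing convexity of $q^\b$) and since $H(t,y^\b,\bar u^\b,w^*)=0$, this forces $\bar u^\b(t,y^\b)>\bar u^{\b,no}(t,y^\b)$. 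Applying the monotone clipping $\max(\min(\cdot,1),0)$ transfers the inequality to $u^{\b^*}>u^{\b,no^*}$ whenever the two do not collide at a common boundary $0$ or $1$. The case $C_2$ is symmetric: there $w^*>0$, the difference above is positive, and monotonicity gives $\bar u^\b<\bar u^{\b,no}$, hence the reverse inequality for the clipped policies.

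For the final comparative-statics claim, the natural starting point is the FOC $\widetilde H=0$, which rearranges to
\begin{align}
w^*(t,y^\b) = \frac{\mu(t)-r(t)}{\gamma B(t,T)\sigma^2(t)} - \frac{\lambda^\b(t,y^\b)\int_0^{+\infty}K(t,z)e^{\gamma B(t,T)(z u^{\b^*}+w^*K(t,z))}F^\b(\ud z)}{\gamma B(t,T)\sigma^2(t)}.
\end{align}
The subtracted term is visibly increasing in $\lambda^\b$ and (pointwise) in $K$. To make this a rigorous monotonicity statement I would use the implicit function theorem on the pair $(H,\widetilde H)=0$ regarded as a system in $(u^\b,w)$ with parameter $\lambda^\b$ (respectively with a scalar perturbation of $K$): the coefficient matrix is exactly the Hessian of $\Psi_2$ in $(u^\b,w)$, whose positive-definiteness is proved in Proposition \ref{convex}, so Cramer's rule applies. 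In the illustrative expected-value setting this is immediate from the explicit formula \eqref{inv}, which is linear and decreasing in $\lambda^\b$ and in $k(t)$.

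\textbf{Main obstacle.} In the general case, signing the Cramer determinants requires comparing moment integrals of the form $\int z^a K(t,z)^b e^{\gamma B(t,T)(zu+wK)}F^\b(\ud z)$; the combinations that enter are not directly covered by Cauchy--Schwarz applied to the measure $e^{\gamma B(t,T)(zu+wK)}F^\b(\ud z)$, so some additional rearrangement (for instance, showing $I_{10}I_{11}-I_{20}I_{01}\le 0$ via an FKG/rearrangement-type argument in $z$ and $K(t,z)$) is needed. This is the delicate step; everything else is bookkeeping on the FOCs and an appeal to Propositions \ref{stime} and \ref{convex}.
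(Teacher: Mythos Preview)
Your argument for (i)--(iii) is essentially the paper's own proof: for (i) the paper simply invokes Proposition~\ref{stime}, and for (ii)--(iii) it records the comparison $H(t,y^\b,u^\b,w)\gtrless H^{no}(t,y^\b,u^\b)$ according to the sign of $w$, then combines this with the sign of $w^*$ on $C_1,C_2$ from Proposition~\ref{stime} to order the retention levels. Your presentation is a bit more explicit (evaluating at $\bar u^{\b,no}$ and invoking the strict monotonicity of $H$ in $u^\b$), but the content is identical, including the tacit caveat about the clipping at $0$ or $1$ which the paper does not address either.

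Where you diverge is the final comparative-statics claim. The paper's entire argument there is one sentence: ``the monotonic dependence \dots\ follows by the form of the function $\widetilde H$ given in \eqref{eq:Htilde}'', i.e.\ it reads the monotonicity heuristically off the extra positive term in $\widetilde H$ without disentangling the implicit dependence of $u^{\b^*}$ on $\lambda^\b$ or $K$. Your implicit-function-theorem/Cramer approach is strictly more rigorous than what the paper actually does, and the determinant-signing issue you flag is a genuine subtlety if one wants a full proof in the general case; the paper simply does not confront it. So your ``main obstacle'' is not a gap in your proposal relative to the paper --- it is a point the paper itself glosses over (and which, as you note, disappears in the expected-value case via \eqref{inv}).
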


\begin{proof}
The property (i) follows directly by Proposition \ref{stime}. We now show conditions (ii) and (iii).
For all $w >0$, we have that $H(t,y^\b, u^\b,w) > H^{no}(t,y^\b, u^\b)$, for all $(t, y^\b, u^\b) \in [0,T] \times \mathbb R^2$, and for all $ w <0$ the inequality reverses, $H(t,y^\b, u^\b,w) < H^{no}(t,y^\b, u^\b)$, for all $(t, y^\b, u^\b) \in [0,T] \times \mathbb R^2$.
These inequalities imply  that, if the optimal quote invested in the risky asset is positive, then the optimal retention level under common shock is lower than the case where no-common shock, that is  $u^{{\b}^{*} } (t,Y^\b_t) <  u^{{\b,no}^*} (t,Y^\b_t)$. Conversely, if the optimal quote invested in the risky asset is negative we have that
$u^{\b^*} (t,Y_t^\b) > u^{{\b,no}^*} (t,Y_t^\b)$. Hence, properties (ii) and (iii) follow by applying Proposition \ref{stime}.

Finally, the monotonic dependence of the optimal investment strategy with common shock by the quantities $\lambda^\b(t,Y^\b_t)$ and $K(t,Z^\b)$  follows by the form of the function $\widetilde H$ given in \eqref{eq:Htilde}.
\end{proof}

We now make a few observations on the behavior of the insurance company.

\begin{itemize}
\item[1.]  The optimal quote invested in the risky asset in case of common shock dependence, relies upon the stochastic
factor $Y^\b$ which affects the arrival intensity of the second line of business. Moreover, this is always smaller than the optimal investment strategy without common shock.
Instead, the optimal reinsurance strategy in case of common shock dependence does not always dominate the case of no-common shock. Whether it is smaller or larger than the optimal reinsurance without common shock depends on the optimal investment strategy.\\
If the impact of the shock on the risky asset price is small enough, i.e. if $(t,Y^\b_t) \in C_2$,  the insurance company invests a positive quote in the risky asset (i.e. buys the risky asset) and  the optimal proportional quote to be transferred to reinsurance is larger than in case of no-common shock. Conversely, if the impact of the shock on the risky asset price is large, i.e. if $(t,Y^\b_t) \in C_1$, the insurance company has a completely different behavior: it short-sells the asset and buys less reinsurance. In other words, the insurance company transfers to the reinsurance a smaller percentage of its claims compared to the case without common shock.
\item[2.]Due to the form of the function $\widetilde H$ in equation \eqref{eq:Htilde}, we get that for all $t \in [0,T]$,
the difference  between the investment strategies without and with common shock, $w^{*,no}(t) - w^*(t, Y^\b_t)$ increases when the intensity $\lambda^\b(t,Y^\b_t)$ of claim arrival as well as with the jump size of the risky asset due to the shock, that is the quantity $K(t,Z^\b)$,  increase.
\item[3.]
An analogous effect of common shocks on the optimal strategies is obtained by \citet{liang2018optimal} under different modelling assumptions, i.e. when arrival intensities are assumed to be constant and  risky asset jump size  (induced by the common shock) are taken independent of  the claim sizes. 
In \citet{liang2018optimal}, it is shown under the classical premium principle that 
whether the optimal reinsurance strategy with common shock is smaller or larger than that without common shock depends on the values of the parameters of the model and that the optimal investment strategy with common shock is always smaller than the one without common shock dependence. Moreover, it decreases as the claims arrival intensity parameter gets larger.
\end{itemize}

\subsection*{Comparison under the Expected Value Principle}
Under the expected value principle discussed in Section \ref{EVP}, the case of no-common shock is obtained by taking $k(t)=0$. The optimal  strategy is given
by $\{(u^{{\a}^*}(t), u^{{\b,no}^{* }} (t), w^{*,no}(t)),\ t \in [0,T]\}$,
where
$u^{{\b,no}^{* }} (t) = \min \{1, \bar u^{\b,no}(t)\}$, $ \bar u^{\b,no}(t)>0$ is the unique solution to the following equation
$$(1+\theta_R^\b) \mathbb{E} [Z^\b] = \int_0^{+\infty} z  e^{\gamma B(t,T) z u^\b} F^\b(\ud z),$$
and $$ w^{*,no}(t) = \frac{\mu(t) - r(t)}{\gamma B(t,T) \sigma^2(t)}.$$ Hence, the optimal  strategy is a function of time only.
Accounting for common shock introduces an additional stochastic term in the investment strategy given by $- \frac{\lambda^\b(t, Y^\b_t) k(t) (1+\theta_R^\b) \mathbb{E} [Z^\b]} {\gamma B(t,T) \sigma^2(t)}$, see equation \eqref{inv}.
In this specific case, we can detect explicitly the effect of common shock on the optimal reinsurance strategy through the quantity
$$\bar u^{\b}(t, Y^\b_t) = \bar u^{\b,no}(t) - k(t) \bar w(t, Y^\b_t).$$
%
%
The same considerations of the general case apply to this example, too. In particular, since for all $t \in [0,T]$, $\bar w(t, Y^\b_t)$ gets smaller as  $ \lambda^\b(t, Y^\b_t)$ and $k(t)$ gets larger (see equation \eqref{inv}) and $\bar u^{\b,no}(t)$ does not depend on $ \lambda^\b(t, Y^\b_t)$ and $k(t)$, then  $\bar u^{\b}(t, Y^\b_t)$ increases with $ \lambda^\b(t, Y^\b_t)$ and $k(t)$ for all $t \in [0,T]$.
Consequently, the optimal reinsurance strategy in the second line of business $u^{\b^*}(t, Y^\b_t)$, for each $t \in [0,T]$,  increases whenever the claims arrival intensity $ \lambda^\b(t, Y^\b_t)$ and amplitude of risky asset jump $k(t)$ increase.

Moreover, the optimal reinsurance strategy in the second line of business $u^{\b^*}(t, Y^\b_t)$ increases with respect to the reinsurance safety loading $\theta_R^\b$, for all $t \in [0,T]$. This comes as a consequence of the fact that $\bar w(t, Y^\b_t)$ is decreasing and $\bar u^{\b,no}(t)$ is increasing, with respect to the reinsurance safety loading $\theta_R^\b$ for all $t \in [0, T]$.
The fact that reinsurance premium increases with the claim arrival intensity and the reinsurance safety loading explains monotonic property of the optimal reinsurance strategy with respect to these quantities. The effect produced by amplitude of risky asset jumps, $k(t)$, instead, is a strict consequence of common shock dependence.
Moreover, we observe that
$$|u^{\b^*}(t, Y^\b_t)  -  u^{{\b,no}^*} (t)| \leq k(t)  | \bar w(t, Y^\b_t)|,$$
which provides an upper bound on the difference of reinsurance strategies with and without common shock, in terms of the investment strategy and the amplitude of jumps.

We finally see that if
$$(t, Y^\b_t) \in \left\{(t, y^\b) \in  [0,T]\times \R:\   \mu(t) - r(t)   > \lambda^\b(t, y^\b) k(t) (1+\theta_R^\b) \mathbb{E} [Z^\b] \right\},$$
then by \eqref{inv},
$| \bar w(t, Y^\b_t)| = \bar w(t, Y^\b_t)>0$, and hence $|u^{\b^*}(t, y^\b)  -  u^{{\b,no}^*} (t)| = u^{{\b,no}^*} (t) - u^{\b^*}(t, Y^\b_t)$ decreases when $ \lambda^\b(t, Y^\b_t)$ 
 increases.

If, instead,
$$(t, Y^\b_t) \in \left\{(t, y^\b) \in  [0,T]\times \R:\   \mu(t) - r(t)   <  \lambda^\b(t, y^\b) k(t) (1+\theta_R^\b) \mathbb{E} [Z^\b] \right\},$$
then $| \bar w(t, Y^\b_t)| = - \bar w(t, Y^\b_t)>0$, and hence $|u^{\b^*}(t, Y^\b_t)  -  u^{{\b,no}^*} (t)| = u^{\b^*}(t, Y^\b_t)  -  u^{{\b,no}^*} (t)$
increases when  $ \lambda^\b(t, Y^\b_t)$ and the reinsurance safety loading $\theta_R^\b$  decrease (because $u^{{\b,no}^*} (t)$ also increases  with  $\theta_R^\b$).

\section{Existence and uniqueness of classical solution to the HJB-equation}\label{sec:existence}

In this section we provide sufficient conditions  for existence and uniqueness of a classical solution $\psi(t,y^\a,y^\b)$ to the reduced HJB-equation  \eqref{eq:bellman1}.  We have seen that
this is implied by existence of classical solutions to PDEs \eqref{eq:bellman2} and \eqref{eq:bellman3}.

\begin{proposition}\label{P1}
Assume that the functions
\begin{itemize}
\item[(i)] $c^\a$, and  $\lambda^\a$  are bounded and Lipschitz-continuous in $(t,y^\a) \in [0,T] \times \R$;
\item[(ii)] $q^\a$ is  bounded and Lipschitz-continuous in $(t,y^\a) \in [0,T]\times \R$ uniformly with respect to $u^\a \in [0,1]$;
\item[(iii)]$a^\a$ is continuous with $a^\a(t)>\kappa>0$.
\end{itemize}
Then, there exists a unique solution $\psi_1(t,y^\a) \in C^{1,2}([0,T] \times \R)$ of the PDE  \eqref{eq:bellman2}.
\end{proposition}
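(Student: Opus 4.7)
The key observation is that PDE \eqref{eq:bellman2} is actually \emph{linear} in $\psi_1$: setting
\[
h(t,y^\a) := \inf_{u^\a \in [0,1]} \Psi_1(t,y^\a,u^\a),
\]
the equation reduces to the linear parabolic Cauchy problem
\[
\frac{\partial \psi_1}{\partial t}(t,y^\a) + b^\a(t) \frac{\partial \psi_1}{\partial y^\a}(t,y^\a) + \frac{1}{2}(a^\a(t))^2 \frac{\partial^2 \psi_1}{\partial {y^\a}^2}(t,y^\a) + h(t,y^\a)\psi_1(t,y^\a) = 0,
\]
with terminal condition $\psi_1(T,y^\a)=1$. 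The plan is to show (i) that $h$ is bounded and Lipschitz-continuous in $(t,y^\a)$, and then (ii) invoke classical existence/uniqueness results for linear parabolic PDEs with uniformly elliptic principal part.

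For step (i), the map $u^\a \mapsto \Psi_1(t,y^\a,u^\a)$ is continuous on the compact set $[0,1]$, so the infimum is attained. Using assumptions (i)--(ii), the functions $c^\a, q^\a, \lambda^\a$ are uniformly bounded, and for $u^\a \in [0,1]$ we can bound the exponential integrand by $e^{\gamma \overline{B} z}$, which is $F^\a$-integrable thanks to the exponential moment condition carried over from Assumption \ref{ASS}(ii). This gives uniform boundedness of $\Psi_1$, hence of $h$. For the Lipschitz property, one first verifies that $\Psi_1(\cdot,\cdot,u^\a)$ is Lipschitz in $(t,y^\a)$ with a constant independent of $u^\a \in [0,1]$ — this follows from uniform Lipschitz regularity of $c^\a,q^\a,\lambda^\a$ together with differentiation under the $F^\a$-integral, which is justified by the exponential moment bound. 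Standard Berge-type reasoning then transfers the uniform Lipschitz property to the infimum: picking minimizers $u^{\a\ast}(t,y^\a)$ and $u^{\a\ast}(t',y^{\a\prime})$ and estimating $h(t,y^\a)-h(t',y^{\a\prime})$ from above and below by $\Psi_1(\cdot,\cdot,u^{\a\ast}(t',y^{\a\prime}))$ and $\Psi_1(\cdot,\cdot,u^{\a\ast}(t,y^\a))$ respectively yields the bound.

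For step (ii), once $h$ is bounded and Lipschitz, the problem has continuous time-dependent drift $b^\a$, diffusion coefficient $a^\a$ satisfying the uniform ellipticity condition $(a^\a(t))^2 \geq \kappa^2 > 0$, bounded Lipschitz zero-order term $h$, and bounded smooth terminal datum. Classical parabolic theory (as in Friedman, \emph{Partial Differential Equations of Parabolic Type}, Ch.~1) then yields existence and uniqueness of a bounded $\psi_1 \in C^{1,2}([0,T]\times\R)$. Equivalently, one can exhibit the solution via the Feynman--Kac formula
\[
\psi_1(t,y^\a) = \mathbb{E}^{t,y^\a}\!\left[\exp\!\left(\int_t^T h(s,Y^\a_s)\,\ud s\right)\right],
\]
and verify the required $C^{1,2}$ regularity using smoothness of the stochastic flow associated with \eqref{def:y1} (ensured by the time-dependent but spatially constant coefficients $a^\a, b^\a$) together with the Lipschitz regularity of $h$.

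The main obstacle is the Lipschitz analysis in step (i): the integral $\int_0^{+\infty}(e^{\gamma B(t,T) z u^\a}-1)\,F^\a(\ud z)$ is a priori not well-behaved without an exponential moment on $F^\a$, and the appearance of the discount factor $B(t,T)$ inside the exponential requires one to estimate derivatives in $t$ carefully. All the rest is a direct invocation of well-known results — this is precisely why the statement can be quoted from \citet[Corollary~8.3]{brachetta2019proportional}, where constant interest rate was assumed; the time-dependent case here is handled by replacing constants with their uniform bounds $\overline{B}$ and absorbing them into Lipschitz/boundedness constants.
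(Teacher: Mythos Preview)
Your proposal is correct and essentially elaborates the argument that the paper only cites: the paper's proof consists entirely of references to \citet[Corollary~8.3]{brachetta2019proportional} and \citet[Theorem~5.3]{pham1998optimal}, whereas you spell out the linearization via $h(t,y^\a)=\inf_{u^\a}\Psi_1(t,y^\a,u^\a)$ and the boundedness/Lipschitz analysis of $h$ that makes those classical results applicable.
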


\begin{proof}
This result has been proved in \citet[Corollary 8.3]{brachetta2019proportional}. It can also be derived from \citet[Theorem 5.3]{pham1998optimal}, under similar hypotheses.
\end{proof}

\begin{proposition}\label{P2}
Assume that the functions
\begin{itemize}
\item[(i)] $a^\b,b^\b,\mu, r,\sigma$ are continuous, with $a^\b(t), \sigma(t)>\kappa>0$;
\item[(ii)] the functions $c^\b, q^\b$ are continuous, Lipschitz in $(t,y^\b) \in [0,T]\times \R$, uniformly with respect to $u^\b \in [0,1]$;
\item[(iii)] $\lambda^\b$ is bounded.
\end{itemize}
Then, there exists a unique solution $\psi_2(t,y^\b) \in C^{1,2}([0,T] \times \R)$ of the PDE \eqref{eq:bellman3}.
\end{proposition}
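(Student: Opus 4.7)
The plan is to reduce the semilinear HJB equation \eqref{eq:bellman3} to a linear parabolic PDE by separating the pointwise minimization from the PDE itself, and then invoke classical existence-uniqueness theory for linear parabolic equations (together with a Feynman--Kac verification if needed), in the same spirit as Proposition \ref{P1}.

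First I would invoke Proposition \ref{existence} (implicitly assuming the convexity hypothesis on $q^\b$ from Proposition \ref{convex} and the limit conditions from Lemma \ref{lemma:minimizer} hold, as is standard under the classical premium principles) to define the pointwise minimizer $(u^{\b^*}(t,y^\b), w^*(t,y^\b))$ and set
\begin{equation}
F(t,y^\b) := \Psi_2\bigl(t, y^\b, u^{\b^*}(t,y^\b), w^*(t,y^\b)\bigr) = \inf_{(u^\b,w) \in [0,1]\times\R} \Psi_2(t,y^\b,u^\b,w).
\end{equation}
The PDE \eqref{eq:bellman3} then becomes the \emph{linear} backward parabolic equation
\begin{equation}
\frac{\partial \psi_2}{\partial t} + b^\b(t)\frac{\partial \psi_2}{\partial y^\b} + \frac{1}{2}(a^\b(t))^2 \frac{\partial^2 \psi_2}{\partial (y^\b)^2} + F(t,y^\b)\,\psi_2 = 0, \quad \psi_2(T,y^\b) = 1.
\end{equation}

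Second, I would establish the required regularity of the driver $F$. Since $\Psi_2$ is jointly continuous in $(t,y^\b,u^\b,w)$ and strictly convex in $(u^\b,w)$ by Proposition \ref{convex}, the minimizer is unique and, by Berge's maximum theorem, the map $(t,y^\b)\mapsto(u^{\b^*},w^*)$ is continuous. The a priori bounds of Proposition \ref{stime}, combined with the boundedness of $\lambda^\b$, confine $w^*$ to a bounded subset of $\R$ uniformly in $y^\b$, so $F$ is bounded on $[0,T]\times\R$. Local Lipschitz/H\"older continuity of $F$ in $(t,y^\b)$ then follows by combining the Lipschitz hypotheses on $c^\b,q^\b$, continuity of $\mu,r,\sigma,\lambda^\b$, and uniform strict convexity of $\Psi_2$ on the bounded range of optimal controls (an envelope-theorem-type argument, using $\partial\Psi_2/\partial u^\b = \partial\Psi_2/\partial w = 0$ at the minimizer).

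Third, with $F$ bounded and (locally) H\"older continuous and the diffusion coefficient $a^\b$ uniformly elliptic ($a^\b(t)>\kappa>0$), I would apply a classical existence-uniqueness theorem for linear parabolic equations with H\"older-continuous coefficients (e.g. Friedman, \emph{Partial Differential Equations of Parabolic Type}, or the argument via \citet[Theorem 5.3]{pham1998optimal} used for Proposition \ref{P1}) to obtain a unique classical solution $\psi_2\in C^{1,2}([0,T]\times\R)$. As an alternative verification route, the Feynman--Kac formula yields the candidate
\begin{equation}
\psi_2(t,y^\b) = \mathbb{E}^{t,y^\b}\!\left[\exp\!\left(\int_t^T F(s,Y^\b_s)\,\ud s\right)\right],
\end{equation}
which is well-defined by boundedness of $F$, and its $C^{1,2}$ regularity can be checked directly from the H\"older continuity of $F$ together with non-degeneracy of $Y^\b$.

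The main obstacle will be step two: verifying that $F(t,y^\b)$ inherits enough regularity (boundedness plus local H\"older continuity) from the data. The delicate point is that the optimization over $w\in\R$ is unconstrained and the functional $\Psi_2$ contains the exponential jump term $\int_0^{+\infty} e^{\gamma B(t,T)(zu^\b + wK(t,z))}\lambda^\b(t,y^\b)F^\b(\ud z)$, which grows superlinearly in $w$; Proposition \ref{stime} is essential here to restrict the effective domain of $w^*$ to a bounded region, so that all the integrals and their derivatives with respect to $(t,y^\b)$ are controlled uniformly.
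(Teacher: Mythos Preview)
The paper's proof is a one-line citation to general HJB existence results (\citet[Theorem 5.3]{pham1998optimal} or \citet[Theorem 1]{colaneri2019classical}), applied directly to the semilinear equation \eqref{eq:bellman3} without first computing the minimizer; those theorems handle the nonlinear $\inf$ internally under Lipschitz-type hypotheses on the data. Your linearization route---compute $F=\inf\Psi_2$ first, then apply linear parabolic theory or Feynman--Kac---is a genuinely different and more explicit approach, and it is sound in spirit. What it buys you is a concrete representation of $\psi_2$; what the paper's citation buys is that no structural information about the minimizer is needed.

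Two issues in your argument, however. First, you invoke Proposition \ref{existence} and Lemma \ref{lemma:minimizer} to obtain the minimizer, but those require $\partial^2 q^\b/\partial(u^\b)^2\ge 0$ and the limit condition \eqref{H1}, neither of which appears among hypotheses (i)--(iii) of Proposition \ref{P2}; as written you are proving a weaker statement. You can repair this without extra hypotheses: the quadratic term $\tfrac12\gamma B(t,T)\sigma^2(t)w^2$ with $\sigma(t)>\kappa>0$ makes $\Psi_2$ coercive in $w$, and $u^\b$ ranges over the compact $[0,1]$, so the infimum is finite; $F$ then inherits Lipschitz regularity in $y^\b$ directly from the data by an ``infimum of uniformly Lipschitz functions is Lipschitz'' argument, bypassing Berge and the explicit minimizer. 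Second, your claim that $F$ is bounded is not supported by hypothesis (ii), which gives $c^\b,q^\b$ Lipschitz in $(t,y^\b)$, hence at most linear growth; the premium term $\gamma B(t,T)\bigl(q^\b-c^\b\bigr)$ in $\Psi_2$ may therefore grow linearly in $y^\b$. This is not fatal---linear parabolic theory and Feynman--Kac still apply with an at most linearly growing potential against the Gaussian factor $Y^\b$---but the boundedness assertion and the subsequent appeal to bounded-coefficient results should be replaced by the corresponding statements for coefficients of linear growth.
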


\begin{proof}
The proof of this result follows, for instance from \citet[Theorem 5.3]{pham1998optimal} or from \citet[Theorem 1]{colaneri2019classical}.
\end{proof}

\section{Conclusions}
In this paper we have investigated the optimal investment and proportional reinsurance problem of an insurance company with exponential utility preferences in a stochastic-factor model where the stock market and the insurance market are correlated by means of a common shock.
The insurance company experiences two types of claims, namely ordinary claims and catastrophic claims, which correspond to two different lines of business, and their arrival intensities are affected by environmental (stochastic) factors, as well as the insurance and reinsurance premia; moreover, the arrival of catastrophic claims affects risky asset prices by inducing downward jumps.
The proposed model specification 
includes then two main features: firstly, a possible dependence between the financial and the pure insurance frameworks due to the common shock effect; secondly, a dependence of claim arrival intensities on some exogenous stochastic factors, and that allows to consider more general valuation premia.\\
Under suitable conditions on model coefficients, by applying the Hamilton-Jacobi-Bellman approach we have provided a classical solution of the resulting optimization problem  and characterized the optimal investment and reinsurance strategy. We have performed a comparison analysis on the optimal strategies with and without a common shock effect in the underlying financial-insurance setting. Our findings show that
in the common shock case, 
the insurance company 
invests a smaller quote in the financial market, compared to the optimal investment strategy without common shock, and the reinsurance strategy corresponding to the second line of business depends on the investment strategy.
In particular, when the impact of the common shock on the risky asset price is small, the insurance company takes a long position in the risky asset and the optimal reinsurance is smaller than in case of no-common shock. Conversely, when the impact of the common shock on the risky asset price is large the insurance company takes a short position in the risky assets and the optimal reinsurance is larger than in case of no-common shock. The optimal investment strategies with and without common shocks deviate more, as the intensity $\lambda^\b(t,Y^\b_t)$ of claim arrival increases and the size of jumps $K(t, Z^\b)$ of the risky asset due to the shock gets larger. This behavior is confirmed in case of classical valuation principles, e.g. the expected value principle, where in addition it is possible to explicitly quantify an upper bound for difference of reinsurance strategies with and without common shock, in terms of the investment strategy and the amplitude of asset price jumps.

\section*{Acknowledgements}
The authors  are members of INdAM-GNAMPA and their work has been partially supported through the Project U-UFMBAZ-2020-000791.

\appendix
\section{Proofs}\label{app:proofs}

This section collects a few technical proofs of the results stated in the paper.

\begin{proof}[Proof of Proposition \ref{ADM}]
We start with the condition \eqref{eq:integrab_strategy}. We have
\begin{align}
&\mathbb{E} \left[\int_0^T \left(|w_t| \left(|\mu(t) - r(t)|+ \lambda^\b(t, Y^\b_t) \right) + w^2_t \sigma^2(t) \right)\ud t \right] \\
&\leq \mathbb{E}\left[\int_0^T \eta(t)\left(|\mu(t) - r(t)|+ \delta(t) \right) + \eta^2(t) \sigma^2(t)\ud t\right],
\end{align}
which is finite because of conditions \eqref{w_ass}, \eqref{eq:delta1}  and $(i)$ in Assumption \ref{ASS}.
Condition \eqref{eq:integrab_strategy2} is implied by \eqref{eq:delta2}.
Next, we will prove that $\esp{e^{-\gamma X^\alpha_T}}<\infty$.
First, we observe that
$c^\a(t, Y^\a_t)+c^\b(t, Y^\b_t)>0$ and that $q^\a(t, Y^\a_t, u^\a_t)+q^\b(t, Y^\b_t, u^\b_t)\leq q^\a(t, Y^\a_t,0)+q^\b(t, Y^\b_t,0))<2\delta(t)$. Then, it holds that
\begin{align}
e^{-\gamma\int_0^TB(t,T) (c^\a(t, Y^\a_t)+c^\b(t, Y^\b_t)-q^\a(t, Y^\a_t, u^\a_t)-q^\b(t, Y^\b_t, u^\b_t))\ud t}\leq e^{2\gamma \overline{B}\int_0^T \delta(t)  \ud t}= k_0 <\infty.
\end{align}
Denote $k_1=k_0 e^{R_0\overline{B}}$ and $\varrho= 2\gamma \overline{B}$. From equation \eqref{eqn:wealth_sol} we have that
\begin{align}
&\mathbb{E}  \left[e^{-\gamma X^\alpha_T}\right]=e^{R_0\overline{B}} \mathbb{E}\big[ e^{-\gamma\int_0^TB(t,T) (c^\a(t, Y^\a_t)+c^\b(t, Y^\b_t)-q^\a(t, Y^\a_t, u^\a_t)-q^\b(t, Y^\b_t, u^\b_t))\ud t}\\
& \quad \times e^{-\gamma\int_0^TB(t,T)w_t\sigma(t)\left(\frac{\mu(t)-r(t)}{\sigma(t)}\ud t+\ud W_t \right)}e^{\gamma\int_0^TB(t,T)\int_0^{+\infty} z u^\a_t m^\a(\ud t, \ud z)}e^{\gamma\int_0^TB(t,T)\int_0^{+\infty} (z u^\b_t +w_t K(t,z))m^\b(\ud t, \ud z)}\big]\\
&\leq k_1 \mathbb{E} \big[e^{-\gamma\int_0^TB(t,T)w_t\sigma(t)\left(\frac{\mu(t)-r(t)}{\sigma(t)}\ud t+\ud W_t \right)}e^{\gamma\int_0^TB(t,T)\int_0^{+\infty} z m^\a(\ud t, \ud z)} e^{\gamma\int_0^TB(t,T)\int_0^{+\infty} (z + \eta(t))m^\b(\ud t, \ud z)}\big]\\
& \leq \frac{k_1}{2}\mathbb{E}\big[e^{-2\gamma\int_0^TB(t,T)w_t\sigma(t)\left(\frac{\mu(t)-r(t)}{\sigma(t)}\ud t+\ud W_t \right)}\big]\\
& \quad +\frac{k_1}{2} \mathbb{E}\big[e^{2\gamma\int_0^TB(t,T)\int_0^{+\infty} z  m^\a(\ud t, \ud z)}e^{2\gamma\int_0^TB(t,T)\int_0^{+\infty} (z + \eta(t))m^\b(\ud t, \ud z)}\big]\\
&= \frac{k_1}{2}\mathbb{E}\big[e^{-2\gamma\int_0^TB(t,T)w_t\sigma(t)\left(\frac{\mu(t)-r(t)}{\sigma(t)}\ud t+\ud W_t \right)}\big]\\
&\quad +\frac{k_1}{2} \mathbb{E}\big[e^{\varrho\int_0^T\int_0^{+\infty} z  m^\a(\ud t, \ud z)}\big]\mathbb{E}\big[e^{\varrho\int_0^T\int_0^{+\infty} (z + \eta(t))m^\b(\ud t, \ud z)}\big],\label{eq:inequalities1}
\end{align}
where in the first inequality we have used that $u^\a_t, u^\b_t$ and $K(t,z)$ are bounded from above by $1$ and that $w_t<\eta(t)$ for all $t \in [0,T]$, and the last equality follows from the independence of the random measures  $m^\a$ and $m^\b$ and the fact that $B(t,T)<\overline{B}$ for all $t \in [0,T]$.
We show now that  each of the three expected values in \eqref{eq:inequalities1} is finite.

\textbf{(i).} We begin with the first term
 \begin{align}
\mathbb{E}\big[e^{-2\gamma \int_0^TB(t,T)w_t\sigma(t)\left(\frac{\mu(t)-r(t)}{\sigma(t)}\ud t+\ud W_t \right)}\big]. \end{align}
Consider the probability measure $\Q$ with density process $\ds {\ud  \Q \over \ud \P} \Big{|}_{\F_t} = L_t$, for all $t \in [0,T]$, where
 $$L_t = e^{\ds- \int_0^t {\mu(s) - r(s) \over \sigma(s)} \ud   W_s-\frac{1}{2}\int_0^t\frac{(\mu(s) - r(s))^2}{\sigma^2(s)}\ud s}.$$
 Clearly, $L$ is a square integrable $\P$-martingale and the measure $\Q$ is equivalent to $\P$ (hence also $L^{-1}$ is a square integrable $\Q$-martingale). Let $W^\Q_t := \int_0^t {\mu(s) - r(s) \over \sigma(s)} \ud s + W_t$, for each $t \in [0,T]$; by Girsanov's theorem the process $W^\Q=\{W_t^\Q,\ t \in [0,T]\}$ is a $\Q$-Brownian motion. 
 Then, we have
 \begin{align}
 &\mathbb{E}\big[e^{-2\gamma \int_0^TB(t,T)w_t\sigma(t)\left(\frac{\mu(t)-r(t)}{\sigma(t)}\ud t+\ud W_t \right)}\big]
\leq  \mathbb{E}^\Q \left[  L^{-1}_T e^{ -  \int_0^T \varrho w_s \sigma(s) \ud W^\Q_s} \right]
  \\
 & \leq {1\over 2} \mathbb{E}^\Q [  L^{-2}_T]  + \frac{1}{2}\mathbb{E}^\Q \left[e^{ -  \int_0^T 2 \varrho w_s \sigma(s) \ud W^\Q_s}\right] \leq {1\over 2}  \mathbb{E}^\Q [  L^{-2}_T]  + \frac{1}{2}e^{  2 \varrho^2 \int_0^T  \eta^2(s) \sigma^2(s) \ud s }   <  \infty,
\end{align}
where the last inequality follows from condition $(iii)$ in Assumption \ref{ASS} and  \eqref{eq:delta1}.

\textbf{(ii).} Now, we consider the second expectation. Using the fact that $\{Z^\a_n\}_{n \in \mathbb{N}}$ are independent and identically distributed, independent of $N^\a$, and that $N^\a$ is a Poisson process with intensity  $\{\lambda^\a(t,Y_t^\a),\ t\in [0,T]\}$, we get
\begin{align}
&\mathbb{E}\big[e^{\varrho\int_0^T\int_0^{+\infty} z  m^\a(\ud t, \ud z)}\big]=\mathbb{E}\big[e^{\varrho\sum_{n=1}^{N^\a_T}Z^\a_n}\big]=\sum_{n\ge 0}\mathbb{E}\big[e^{\varrho\sum_{j=1}^{N^\a_T}Z^\a_j}|N^\a_T=n\big] \P(N^\a_T=n) \\&=
\sum_{n\ge 0}\left(\mathbb{E}\big[e^{\varrho Z^\a}\big]\right)^n \P(N^\a_T=n)=e^{(\Delta(T)-1)\int_0^T\lambda^\a(t, Y^\a_t) \ud t}\le e^{(\Delta(T)-1)\int_0^T \delta(t)\ud t},
\end{align}
which is finite since $\Delta(T):=\mathbb{E}\big[e^{\varrho Z^\a}\big]<\infty$, because of Assumption \ref{ASS}-$(ii)$, and $\int_0^T\lambda^\a(t, Y^\a_t)\ud t\leq \int_0^T\delta(t)\ud t< \infty$, because of Assumption \ref{ASS}-$(i)$.

\textbf{(iii).} Finally, we consider the last expectation.
%
We define the process $\Gamma=\{\Gamma_t,\ t \in [0,T]\}$ where $\Gamma_t := \int_0^t \int_0^{+\infty}  ( z+  \eta(s)) \,m^\b(\ud s, \ud z)$; then
\begin{align}
e^{\varrho  \Gamma_T} &= 1 + \sum_{s \leq T} \left( e^{\varrho \Gamma_{s}} -  e^{\varrho \Gamma_{s^-}}\right)= 1 + \int_0^T \int_0^{+\infty}  e^{\varrho \Gamma_{s^-}} \big(e^{\varrho (z + \eta(s))} -1\big)  \,m^\b(\ud s, \ud z).
\end{align}
Taking the expectation of both sides of equality yields
\begin{align}
\mathbb{E}[e^{\varrho \Gamma_T} ] & = 1 + \mathbb{E} \left[ \int_0^T \int_0^{+\infty}  e^{\varrho \Gamma_{s^-}}\big( e^{\varrho (z + \eta(s))} -1\big)  \lambda^\b(s, Y^\b_s) F^\b(\ud z) \ud s \right ]\\
&\leq  1 + \int_0^T  \mathbb{E} [e^{\varrho Z^\b}]  \mathbb{E} [ e^{\varrho \Gamma_{s}}] \delta(s) e^{\varrho  \eta(s)} \ud s.
\end{align}
By Gronwall's Lemma and condition \eqref{eq:delta2} we get
$$
\mathbb{E} [e^{\varrho \Gamma_T} ] \leq e^{  \mathbb{E} \big[ e^{\varrho Z^\b}\big]
\int_0^T \delta(s) e^{ \varrho \eta(s)} \ud s }< \infty,
$$
which concludes the proof.
\end{proof}

\begin{proof}[Proof of Theorem \ref{Verification}]
Let $\psi \in \mathcal{C}^{1,2,2}((0,T)\times \R^2)\cap \mathcal{C}([0,T]\times \R^2)$\footnote{$\mathcal{C}^{1,2,2}((0,T)\times \R^2)\cap \mathcal{C}([0,T]\times \R^2)$ indicates the set of functions $f(t, y^\a, y^\b)$ which is $\mathcal{C}^1$ in $t$ on $(0,T)$, and $\mathcal{C}^2$ in $(y^\a, y^\b)$ on $\R^2$ and jointly continuous in $(t, y^\a, y^\b)$ on $[0,T]\times \R^2$.} be a classical solution of the HJB equation \eqref{eq:bellman1} with the final condition \eqref{eq:final_condition1}. Then, the function $V$ defined in \eqref{eq:ansatz} solves the HJB problem in \eqref{eq:bellman}. Hence, for any
$(t,y^\a, y^\b, x) \in [0,T] \times \R \times \R \times \R$, we get that for all $s \in [t,T]$ and for all $\alpha \in \mathcal A$,
\begin{equation}
\frac{\partial V}{\partial t}\left(s,Y_{t,y^\a}^\a(s),Y_{t,y^\b}^\b(s),X_{t,x}^\alpha(s)\right) + \mathcal{L}^\alpha_t V\left(s,Y_{t,y^\a}^\a(s),Y_{t,y^\b}^\b(s),X_{t,x}^\alpha(s)\right) \ge 0,
\end{equation}
where $\{Y_{t,y^\a}^\a(s),\ s \in [t,T]\}$, $\{Y_{t,y^\b}^\b(s),\ s \in [t,T]\}$ and $\{X_{t,x}^\alpha(s),\ s \in [t,T]\}$ denote the solutions to \eqref{def:y1}, \eqref{def:y2} and \eqref{eqn:wealth_proc}, starting from $(t,y^\a) \in [0,T] \times \R$, $(t,y^\b) \in [0,T] \times \R$ and $(t,x) \in [0,T] \times \R$, respectively.
By applying It\^o's formula, we get
\begin{align}
& V\left(T,Y_{t,y^\a}^\a(T),Y_{t,y^\b}^\b(T),X_{t,x}^\alpha(T)\right) = V(t,y^\a, y^\b, x)\\
&   + \!\int_t^T\!\!\left[\frac{\partial V}{\partial t}\left(s,Y_{t,y^\a}^\a(s),Y_{t,y^\b}^\b(s),X_{t,x}^\alpha(s)\right)\! +\! \mathcal{L}^\alpha_t V\left(s,Y_{t,y^\a}^\a(s),Y_{t,y^\b}^\b(s),X_{t,x}^\alpha(s)\right)\right]\! \ud s \!+\! M_T, \quad {} \label{eq:v-ito}
\end{align}
with $\{M_r,\ r \in [t,T]\}$ defined by
\begin{align}
& M_r =\int_t^r a^\a(s)\frac{\partial V}{\partial y^\a}\left(s,Y_s^\a,Y_s^\b,X_s^\alpha\right)\ud W_s^\a  + \int_t^r a^\b(s)\frac{\partial V}{\partial y^\b}\left(s,Y_s^\a,Y_s^\b,X_s^\alpha\right)\ud W_s^\b \\
& \ + \int_t^r w_s \sigma(s)\frac{\partial V}{\partial x}\left(s,Y_s^\a,Y_s^\b,X_s^\alpha\right)\ud W_s\\
& \ + \int_t^r \int_0^{+\infty} \left[V\left(s,Y_s^\a,Y_s^\b,X_s^\alpha-zu_s^\a\right) - V\left(s,Y_s^\a,Y_s^\b,X_s^\alpha\right)\right]\widetilde m^\a(\ud s, \ud z)\\
& \ + \int_t^r \int_0^{+\infty} \left[V\left(s,Y_s^\a,Y_s^\b,X_s^\alpha-zu_s^\b-w_sK(s,z)\right) - V\left(s,Y_s^\a,Y_s^\b,X_s^\alpha\right)\right]\widetilde m^\b(\ud s, \ud z).
\end{align}
where for $i=1,2$, $\widetilde m^\i(\ud s, \ud z):=m^\i(\ud s, \ud z) - \lambda^\i(s,Y_s^\i)F^\i(\ud z)\ud s$.
Next, we show  that $\{M_r,\ r \in [t,T]\}$ is an $\bF$-local martingale.
Let us define a sequence of random times $\{\tau_n\}_{n \in \mathbb N}$ as follows:
\begin{equation}
\tau_n:=\inf\{s \in [t,T]|\ X_s^\alpha < -n \vee |Y_s^\a| > n \vee |Y_s^\b| > n\}.
\end{equation}
In the sequel we denote by $C_n$ any nonnegative constant that depends on $n$. Then, we have that for $i=1,2,$
\begin{align}
& \esp{\int_t^{T \wedge \tau_n}\left(a^\i(s)\frac{\partial V}{\partial y^\a}\left(s,Y_s^\a,Y_s^\b,X_s^\alpha\right)\right)^2 \ud s }\\
& = \esp{\int_t^{T \wedge \tau_n}\left(a^\i(s)e^{ - \gamma X_s^\alpha B(s,T)}  \frac{\partial \psi}{\partial y^\i} (s,Y_s^\a,Y_s^\b)\right)^2 \ud s}
\leq C_n \esp{\int_t^{T \wedge \tau_n} (a^\i(s))^2\ud s} < \infty,
\end{align}
for all $n \in \mathbb{N}$ where the last inequality follows from condition \eqref{eq:a_i}.
Further, we have that
\begin{align}
& \esp{\int_t^{T \wedge \tau_n}\!\!\left(w_s \sigma(s)\frac{\partial V}{\partial x}\left(s,Y_s^\a,Y_s^\b,X_s^\alpha\right)\right)^2 \ud s }\\
& = -\esp{\int_t^{T \wedge \tau_n}\!\!\!\left(w_s \sigma(s)\gamma B(s,T)e^{- \gamma X_s^\alpha B(s,T) } \psi(s,Y_s^\a,Y_s^\b)\right)^2 \ud s }\leq C_n \esp{\int_t^{T \wedge \tau_n}w^2_s \sigma^2(s)\ud s } < \infty,
\end{align}
for all $n \in \mathbb N$, where the last inequality holds because $\{w_t,\ t \in [0,T]\}$ is admissible. In addition, for the first jump term we get
\begin{align}
& \esp{\int_t^{T \wedge \tau_n}\int_0^{+\infty}\left|V\left(s,Y_s^\a,Y_s^\b,X_s^\alpha-zu_s^\a\right) - V\left(s,Y_s^\a,Y_s^\b,X_s^\alpha\right)\right|\lambda^\a(s,Y_s^\a)F^\a(\ud z)\ud s} \\
& = \mathbb E\biggl[\int_t^{T \wedge \tau_n}\int_0^{+\infty}e^{ - \gamma X_s^\alpha B(s,T)}  |\psi (s,Y_s^\a,Y_s^\b)|
\left|e^{\gamma zu^\a_s B(s,T)}- 1\right|\lambda^\a(s,Y_s^\a)F^\a(\ud z)\ud s\biggr]\\
& \leq C_n \esp{\int_t^{T \wedge \tau_n}\int_0^{+\infty}e^{\gamma \overline{B} z }\lambda^\a(s,Y_s^\a)F^\a(\ud z)\ud s}\\
& \leq C_n \esp{e^{\gamma \overline{B}Z^\a }}\esp{\int_t^{T \wedge \tau_n}\lambda^\a(s,Y_s^\a) \ud s}< \infty,\quad n \in \mathbb N,
\end{align}
thanks to Assumption \ref{ASS}. Finally, for the second jump term we obtain
\begin{align}
& \mathbb{E}\bigg[ \int_t^{T \wedge \tau_n}\!\!\!\int_0^{+\infty}\!\!\!\left|V\left(s,Y_s^\a,Y_s^\b,X_s^\alpha-zu_s^\b-w_sK(s,z)\right)\! -\! V\left(s,Y_s^\a,Y_s^\b,X_s^\alpha\right)\right|\\
&\quad\times  \lambda^\b(s,Y_s^\b)F^\b(\ud z)\ud s\bigg] \\
& = \mathbb E\biggl[\int_t^{T \wedge \tau_n}\!\!\!\int_0^{+\infty}\!\!\!e^{ - \gamma X_s^\alpha B(s,T)}  |\psi (s,Y_s^\a,Y_s^\b)| \left|e^{\gamma B(s,T) (zu_s^\b + w_sK(s,z)) }- 1\right|\lambda^\b(s,Y_s^\b)F^\b(\ud z)\ud s\biggr]\\
& \leq C_n\esp{\int_t^{T \wedge \tau_n} \int_0^{+\infty} e^{\gamma \overline{B} (z + |w_s|) }\lambda^\b(s,Y_s^\b)F^\b(\ud z)\ud s}< \infty,
\end{align}
for all $n \in \mathbb{N}$, where the last inequality follows from condition \eqref{eq:integrab_strategy2}. Thus, $\{M_r,\ r \in [t,T]\}$ is an $\bF$-local martingale with localizing sequence $\{\tau_n\}_{n \in \mathbb N}$.
We take the conditional expectation of \eqref{eq:v-ito} with $T \wedge \tau_n$ in place of $T$, and we obtain that
\begin{equation}
\begin{split}
& \esp{V\left(T\wedge \tau_n,Y_{t,y^\a}^\a(T\wedge \tau_n),Y_{t,y^\b}^\b(T\wedge \tau_n),X_{t,x}^\alpha(T\wedge \tau_n)\right)|\F_t} \ge V(t,y^\a, y^\b, x),
\end{split}
\end{equation}
for any $\alpha \in \mathcal A$, $(t,y^\a,y^\b,x) \in \llbracket0,T\wedge \tau_n\rrbracket \times \R^3$, $n \in \mathbb N$. Note that
\begin{equation}
\begin{split}
&\esp{V\left(T\wedge \tau_n,Y_{t,y^\a}^\a(T\wedge \tau_n),Y_{t,y^\b}^\b(T\wedge \tau_n),X_{t,x}^\alpha(T\wedge \tau_n)\right)^2} \\
& \quad= \esp{e^{- 2\gamma X_{t,x}^\alpha(T\wedge \tau_n) B(T \wedge \tau_n, T)}  \psi (T \wedge \tau_n,Y_{t,y^\a}^\a(T\wedge \tau_n),Y_{t,y^\b}^\b(T\wedge \tau_n))^2}<\infty.
\end{split}
\end{equation}
 Then,  $\{V\left(T\wedge \tau_n,Y_{t,y^\a}^\a(T\wedge \tau_n),Y_{t,y^\b}^\b(T\wedge \tau_n),X_{t,x}^\alpha(T\wedge \tau_n)\right)\}_{n \in \mathbb N}$ is a family of uniformly integrable random variables and converges almost
surely. Using the monotonicity and the boundedness of the sequence $\{\tau_n\}_{n \in \mathbb N}$ and the fact that the processes $\{Y_{t,y^\a}^\a(s),\ s \in [t,T]\}$, $\{Y_{t,y^\b}^\b(s),\ s \in [t,T]\}$ and $\{X_{t,x}^\alpha(s),\ s \in [t,T]\}$, see \eqref{def:y1}, \eqref{def:y2} and   \eqref{eqn:wealth_sol}, taking the limit for $n \to +\infty$, we get
\begin{equation}
\begin{split}
& \esp{V\left(T,Y_{t,y^\a}^\a(T),Y_{t,y^\b}^\b(T),X_{t,x}^\alpha(T)\right)\Big{|}\F_t}\\
& = \lim_{n \to +\infty}\esp{V\left(T\wedge \tau_n,Y_{t,y^\a}^\a(T\wedge \tau_n),Y_{t,y^\b}^\b(T\wedge \tau_n),X_{t,x}^\alpha(T\wedge \tau_n)\right)\Big{|}\F_t}\ge V(t,y^\a, y^\b, x),
\end{split}
\end{equation}
for every $\alpha \in \A$, $t \in [0,T]$. Recall that $V(T, y^\a, y^\b, x)=e^{-\gamma x}$ (see equation \eqref{eq:final_condition}), then from the previous inequality we get
\begin{equation}\label{eq:ver1}
\esp{e^{-\gamma X_{t,x}^\alpha(T)}} \ge V(t,y^\a, y^\b, x), \quad \alpha \in \A, \quad t \in [0,T].
\end{equation}
Since $u^{\a,*}(t,y^\a)$ and  $(u^{\b,*}(t,y^\b), w^*(t,y^\b))$ are minimizers of  $\inf_{u^\a \in [0,1]} \Psi_1(t,y^\a,u^\b)$, and
 $\inf_{u^\b \in [0,1], w \in \R} \Psi_2(t,y^\b,u^\b,w) $, respectively, then
\begin{equation}
\frac{\partial V}{\partial t}(t,y^\a,y^\b,x)+\mathcal L_t^{\alpha^*}V(t,y^\a,y^\b,x)=0,
\end{equation}
with $\alpha^*(t,y^\a,y^\b)=(u^{\a,*}(t,y^\a),u^{\b,*}(t,y^\b), w^*(t,y^\b))$. Replicating the computations above, replacing $\mathcal L^{\alpha}$ with $\mathcal L^{\alpha^*}$, we get the equality
\begin{equation}
{\mathbb{E}^{t,y^\a,y^\b,x}\bigl[ e^{-\gamma X_T^{\alpha^*}} \bigr]}=\inf_{\alpha \in\mathcal{A}}{\mathbb{E}^{t,y^\a,y^\b,x}\bigl[ e^{-\gamma X_T^\alpha} \bigr]}=V(t,y^\a,y^\b,x),
\end{equation}
and hence $\alpha^*_t = ({u^\a}^*(t, Y^\a_t), {u^\b}^*(t,Y^\b_t), w^*(t,Y^\b_t))$ is an optimal control.
\end{proof}

\begin{proof}[Proof of Lemma \ref{lemma:minimizer}]
We separate the proof of the lemma in four steps.

\textbf{Step 1 (Existence of $\widetilde u^\b$).} For any $(t, y^\b, w) \in [0,T] \times \R^2$, the function $H(t,w, y^\b, u^\b)$ defined in equation \eqref{H}, is continuous and increasing in $u^\b \in \R$. Then, by the implicit function theorem under condition \eqref{H1}
there is a unique measurable function $\tilde u^{\b}(t, y^\b, w)$ which solves equation \eqref{deriv}, i.e. $H(t,y^\b, \widetilde u^\b(t, y^\b, w), w) =0$.
Moreover, the function $H(t,y^\b, u^\b, w)$  is also continuous and in $w \in \R$. Hence $\tilde u^{\b}(t, y^\b, w
)$ is  also continuous  and it is decreasing in $w \in \R$.

\textbf{Step 2 (Existence of $\widetilde w$).}
Consider now the function $\widetilde H (t,y, u, w)$ defined in equation \eqref{eq:Htilde}.
For any  $(t,y^\b, u^\b) \in [0,T] \times \R^2$, $\widetilde H(t,y^\b, u^\b, w)$ is continuous in $w \in \R$ and satisfies
  \begin{equation}
\lim_{w \to  - \infty} \widetilde H (t,y^\b, u^\b, w)  = - \infty, \quad  \lim_{w \to  + \infty} \widetilde H(t,y^\b, u^\b, w) = + \infty,
\end{equation}
for all $(t, y^\b, u^\b) \in [0,T] \times \R^2$.
Hence, again by the implicit function theorem, there is a measurable function $\widetilde w(t, y^\b, u^\b)$ which satisfies
equation \eqref{derivgen}, that is, $ \widetilde H  (t,  y^\b, u^\b, \widetilde w(t, y^\b, u^\b)) = 0$ for all $(t, y^\b, u^2)\in [0,T]\times \R^2$.

Since $ \widetilde H(t,y^\b, u^\b, w) $ is increasing on $u^\b$, it holds that $\widetilde w(t,y^\b, u^\b)$ is monotonic decreasing in $u^\b$, for every  $(t, y^\b) \in  [0,T] \times \R$.

\textbf{Step 3 (Existence of $\bar w$).} Next, we show that the following limits hold:
\begin{align}
\lim_{w \to  - \infty} \widetilde H (t,y^\b, \tilde u^{\b}(t, y^\b, w
), w)  = - \infty, \label{eq:lim1}\\
 \lim_{w \to  + \infty} \widetilde H(t,y^\b, \tilde u^{\b}(t, y^\b, w), w) = + \infty, \label{eq:lim2}
\end{align}
for all $(t, y^\b) \in [0,T] \times \R$. Since the second limit is easily verified, we only show \eqref{eq:lim2}.
Considering the expression in equation \eqref{eq:Htilde} we see that
\[
\lim_{w \to  - \infty} \gamma B(t,T) \sigma^2(t) w -\! (\mu(t) - r(t) ) =-\infty.
\]
Now, we prove that
\begin{equation}\label{eq:lim3}
\lim_{w\to -\infty}  \lambda^\b(t, y^\b) \int_0^{+\infty} K(t,z) e^{\gamma B(t,T)( zu^\b(t,y^\b,w) + w K(t,z)) } F^\b(\ud z)=c_1<\infty.
\end{equation}
This allows us to conclude that \eqref{eq:lim1} holds.
Equation \eqref{eq:lim3} is satisfied if
\[\lim_{w \to  - \infty} \tilde u^{\b}(t, y^\b, w)z + K(t,z) w < \infty.\]
We prove the latter limit by contradiction. Suppose that
\[\lim_{w \to  - \infty} \tilde u^{\b}(t, y^\b, w)z + K(t,z) w = + \infty.\]
Then, since $H(t, y^\b, \widetilde u^\b(t,y^\b,w),w)=0$ (see Step 1), we also have that
\[\lim_{w \to  - \infty}H(t, y^\b, \widetilde u^\b(t,y^\b,w
),w)=0,
\]
equivalently,
\[
\lim_{u^\b \to  + \infty} \frac{\partial  q^\b}{\partial {u^\b}}(t,y^\b, u^\b)=-\infty,
\]
where the limit on the right side is taken for $u^\b\to +\infty$ because $\widetilde u^\b$ is decreasing in $w$.
This is a contradiction because the function $q^\b$ is increasing in $u^\b$ and hence $\frac{\partial  q^\b}{\partial {u^\b}}(t,y^\b, u^\b)>0$.

The conditions given by the limits \eqref{eq:lim1} and \eqref{eq:lim2}, together with the fact $ \tilde u^\b(t, y^\b, w)$ is continuous in $w$, imply that (again be the implicit function theorem) there exists a unique  $\bar w(t, y^\b)$ such that
\[ \widetilde H (t,y^\b, \tilde u^{\b}(t, y^\b, w), w) = 0,\]
for all  $(t, y^\b) \in  [0,T] \times \R$.


\textbf{Step 4 (The solution $(\bar u^\b(t, y^\b), \bar w(t, y^\b))$ of the system \eqref{deriv}--\eqref{derivgen}).}
Set $\bar u^{\b}(t, y^\b) = \tilde u^{\b}(t, y^\b, \bar w(t, y^\b))$. By Step 1--Step 3 we get that the pair $(\bar u^{\b}(t, y^\b), \bar w(t, y^\b))$ satisfies \eqref{deriv} and \eqref{derivgen}.

\end{proof}

\begin{proof}[Proof of Proposition \ref{stime}]
We start by showing \eqref{dis_1}.
Since for every $(t,y^\b) \in [0,T] \times \R$, $w(t, y^\b, u^\b)$ is monotonic decreasing in $u^\b$ (see the formula \eqref{monotonia}), we only need to show that 
\begin{equation}\label{eq:inequality}\widetilde w(t, y^\b, 0) < \frac{\mu(t) - r(t)}{\gamma B(t,T) \sigma^2(t)}, \end{equation}
for all $(t,y^\b)  \in  [0,T] \times \R$. Using the form of the function $\widetilde H$ in \eqref{eq:Htilde} and the fact that for all $(t, y^\b)\in [0,T]\times \R$, $\widetilde w(t, y^\b, 0)$ is the unique solution of the equation \eqref{derivgen} with $u^\b=0$, we get that (we omit the dependence on $(t, y^\b, 0)$ in $\widetilde w$ in the formula below)
\begin{align}
0=& \widetilde H(t,y^\b, 0, \widetilde w) \\
= &\gamma B(t,T) \sigma^2(t) \widetilde w - (\mu(t) - r(t) )  +   \lambda^\b(t, y^\b)\!\! \int_0^{+\infty} \!\!\!\!K(t,z) e^{\gamma B(t,T)\widetilde w K(t,z)} F^\b(\ud z) \\
 > & \gamma B(t,T) \sigma^2(t) \widetilde w - (\mu(t) - r(t) )
 \end{align}
for all $(t,y^\b, w)  \in  [0,T] \times \R^2$, which implies \eqref{eq:inequality}.

Now, we prove \eqref{dis_2}. Since for all $(t,y^\b) \in [0,T] \times \R$ the function $w(t, y^\b, u^\b)$ is monotonic decreasing in $u^\b$ (see \eqref{monotonia}), we will show that
\begin{align}\widetilde w(t, y^\b, 1) \geq \min \left\{0, \frac{\mu(t) - r(t)}{\gamma B(t,T) \sigma^2(t)} - \frac{  \delta(t) \int_0^{+\infty} e^{\gamma B(t,T) z } F^\b(\ud z)}{\gamma B(t,T) \sigma^2(t)} \right\},
\end{align}
 for all $(t,y^\b)   \in  [0,T] \times \R$. Since $(t, y^\b)\in [0,T]\times \R$,  the function $\widetilde w(t, y^\b, 1) $ is the unique solution of equation \eqref{derivgen} with $u^\b=1$ and we get the following two cases.
 \begin{itemize}
\item[(i)] If $(t, y^\b) \in C_2$, it holds that $\widetilde H(t,y^\b, 0,1) < 0$, and hence $\widetilde w(t, y^\b, 1) >0$.
\item[(ii)] Let $C^c_2$ be the complementary set of $C_2$.  If $(t, y^\b) \in C^c_2$, we have that $\widetilde H(t,y^\b, 0,1) > 0$, and hence $\widetilde w(t, y^\b, 1) < 0$, and for all $ w <0$ it holds that
\begin{align}
0&= \widetilde H(t,y^\b, 1,w ) \\
&\leq  \gamma B(t,T) \sigma^2(t) w - (\mu(t) - r(t) )  +   \lambda^\b(t, y^\b) \int_0^{+\infty} e^{\gamma B(t,T) z } F^\b(\ud z)\\
&\leq \gamma B(t,T) \sigma^2(t) w - (\mu(t) - r(t) )  + \delta(t)\int_0^{+\infty} e^{\gamma B(t,T) z } F^\b(\ud z).
\end{align}
This implies that  $\widetilde w(t, y^\b, 1) \geq  \frac{\mu(t) - r(t)}{\gamma B(t,T) \sigma^2(t)} - \frac{  \delta(t) \int_0^{+\infty} e^{\gamma B(t,T) z } F^\b(\ud z)}{\gamma B(t,T) \sigma^2(t)}$, for all $(t, y^\b) \in C^c_2$.
\end{itemize}
To prove the final assertion we first observe that if $(t,y^\b)   \in C_1$, then $\widetilde H(t,y^\b, 0, 0)>0$ and hence $\widetilde w(t, y^\b, 0) <0$. Since $\widetilde w$ is monotonic in $u^\b\in [0,1]$, from the form of the optimal reinsurance strategy \eqref{reins2}, we get that
$w^*(t,y^\b) < 0$, for all $(t,y^\b)   \in C_1$,
and using the fact that $\widetilde w(t, y^\b, 1) >0$ for all  $(t,y^\b) \in C_2$ we have that
$w^*(t,y^\b) > 0$, for all $(t,y^\b)   \in C_2,$
and this concludes the proof.
 \end{proof}

\bibliographystyle{plainnat}
\bibliography{bibliography}

\end{document}